\documentclass[10pt, onecolumn]{IEEEtran} 

\IEEEoverridecommandlockouts

\IEEEoverridecommandlockouts                              
\overrideIEEEmargins
\usepackage{enumerate}
\usepackage{amsmath,amssymb,algorithm,cite,cases,bm,float,graphicx,url}
\usepackage{pbox}

\usepackage{bm,float,url,color,array,tabu,pbox}

\usepackage{fullpage}

\usepackage{setspace}
\setstretch{1.22}

\usepackage{amsthm}
\newtheorem*{theorem*}{Theorem}

\newtheorem*{lemma*}{Lemma}

\newtheorem{defn}{Definition}[section]
\newtheorem{appdefn}{Definition}[subsection]

\newtheorem{apppropo}{Proposition}[subsection]

\newtheorem{thm}{Theorem}[section]
\newtheorem{appthm}{Theorem}[subsection]

\newtheorem{applem}{Lemma}[subsection]

\newtheorem{cor}{Corollary}[section]

\usepackage{amsmath,amssymb,algorithm,cite,caption,cases,bm,float,graphicx,url,color,subfigure}
\usepackage[hidelinks]{hyperref}
\usepackage{comment}
\usepackage{thm-restate, enumerate }

\definecolor{darkred}{RGB}{150,0,0}
\definecolor{darkgreen}{RGB}{0,150,0}
\definecolor{darkblue}{RGB}{0,0,200}
\hypersetup{colorlinks=true, linkcolor=darkred, citecolor=darkgreen, urlcolor=darkblue}

\usepackage{graphicx}
\usepackage{caption}

\bibliographystyle{IEEETran}



\newcommand{\f}{\psi}
\newcommand{\Ic}{\mathcal{I}}

\newcommand{\xfun}{\upsilon^\mn}
\newcommand{\xfunx}{\xfun(\x;\g,\h)}

\newcommand{\Xfun}{\Upsilon^\mn}

\newcommand{\Xfunx}{\Xfun(\x;\G)}

\newcommand{\mn}{{(n)}}
\newcommand{\rn}{{\|\cdot\|}}

\newcommand{\xp}{\x_\phi(\g,\h)}
\newcommand{\xpn}{\x^\mn_\phi(\g,\h)}

\newcommand{\xPn}{\x^\mn_\Phi(\G)}

\newcommand{\rP}{\xrightarrow{P}}
\newcommand{\rD}{\rightarrow}

\newcommand{\rr}{f}

\newcommand{\Phn}{\Phi^\mn(\G)}
\newcommand{\phn}{\phi^\mn(\g,\h)}
\newcommand{\Ph}{\Phi(\G)}
\newcommand{\ph}{\phi(\g,\h)}

\newcommand{\NSE}{\operatorname{NSE}}

\newcommand{\az}{\alpha_*}

\newcommand{\D}{\mathbf{D}}

\newcommand{\Ws}{\mathcal{W}_*}

\newcommand{\loss}{\mathcal{L}}

\newcommand{\dom}{\operatorname{dom}}

\newcommand{\Ps}{\mathcal{P}}
\newcommand{\Rs}{\mathcal{R}}

\newcommand{\gw}{\omega}

\newcommand{\Exp}{\mathbb{E}}
\newcommand{\Ec}{\mathcal{E}}

\newcommand{\Hc}{\mathcal{H}}

\newcommand{\beq}{\begin{equation}}
\newcommand{\eeq}{\end{equation}}
\newcommand{\bea}{\begin{align}}
\newcommand{\eea}{{\end{align}}}

\newcommand{\vp}{\vspace{4pt}}


\newcommand{\Gb}{\mathbf{G}}
\newcommand{\G}{\mathbf{G}}

\newcommand{\A}{\mathbf{A}}


\newcommand{\w}{\mathbf{w}}

\newcommand{\x}{\mathbf{x}}
\newcommand{\ub}{\mathbf{u}}
\newcommand{\g}{\mathbf{g}}
\newcommand{\vb}{\mathbf{v}}
\newcommand{\bb}{\mathbf{b}}

\newcommand{\y}{\mathbf{y}}
\newcommand{\s}{\mathbf{s}}
\newcommand{\z}{\mathbf{z}}

\newcommand{\ab}{\mathbf{a}}

\newcommand{\h}{\mathbf{h}}


\newcommand{\Sc}{{\mathcal{S}}}

\newcommand{\Bc}{{\mathcal{B}}}

\newcommand{\Dc}{\mathcal{D}}
\newcommand{\Lc}{{\phi}}

\newcommand{\Kc}{\mathcal{K}}
\newcommand{\Nc}{\text{Null}}

\newcommand{\Nn}{\mathcal{N}}
\newcommand{\Cc}{\mathcal{C}}
\newcommand{\Gc}{{\Phi}}

\newcommand{\R}{\mathbb{R}}
\newcommand{\Pro}{{\mathbb{P}}}
\newcommand{\E}{{\mathbb{E}}}


\newcommand{\paf}{\pa f(\x_0)}

\newcommand{\la}{{\lambda}}

\newcommand{\eps}{\epsilon}

\newcommand{\Tc}{\mathcal{T}}
\newcommand{\pa}{\partial}


\newcommand{\vs}{\vspace}

\newcommand{\nn}{\nonumber}

\newcommand{\thename}{Gaussian min-max theorem}
\newcommand{\GMT}{GMT}

\begin{document}

\title{\LARGE{\bf{
The Gaussian Min-Max Theorem in the Presence of Convexity
}}\vs{1pt}}
%
\author{
Christos Thrampoulidis, Samet Oymak and Babak Hassibi\vs{2pt}\\Department of Electrical Engineering, Caltech, Pasadena 
}

%

\maketitle
\vspace{-10pt}
\begin{abstract} 
Gaussian comparison theorems are useful tools in probability theory; they are essential ingredients in the classical proofs of many results in empirical processes and extreme value theory. More recently, they have been  used extensively in the analysis of non-smooth optimization problems that arise in the recovery of structured signals from noisy linear observations. We refer to such problems as Primary Optimization (PO) problems.  A  prominent role in the study of  the (PO) problems is played by Gordon's \emph{Gaussian min-max theorem} (GMT) which provides probabilistic \emph{lower} bounds on the optimal cost via a simpler Auxiliary Optimization (AO) problem. Motivated by resent work of M. Stojnic, we show that under appropriate \emph{convexity assumptions} the (AO) problem allows one to \emph{tightly} bound both the optimal cost, as well as the norm of the solution of the (PO). As an application, we use our result to develop a general framework to tightly characterize the performance (e.g. squared-error) of a wide class of convex optimization algorithms used in the context of noisy signal recovery.
%
%
\end{abstract}

\section{Introduction}\label{sec:intro}

\vspace{-5pt}
\subsection{Motivation}
 Letting $\G\in\mathbb{R}^{{m}\times {n}}$, sets $\Sc_\x\subset\R^{n}$, $\Sc_\y\subset\R^{m}$ and a function $\psi(\cdot,\cdot):\Sc_\x\times\Sc_\y\rightarrow \R$, we define the following min-max optimization problem, which we shall henceforth refer to  as {\bf{\emph{Primary Optimization}} }(PO), 
\begin{align}
(PO)\qquad\qquad\Gc(\G,\Sc_\x,\Sc_\y,\psi) &:=  \min_{\x\in\Sc_\x}\max_{\y\in\Sc_\y}~ \y^T\G\x + \f({\x,\y}) 
.\label{eq:1}
\end{align}
A few popular instances of \eqref{eq:1} include the following:

\vp
i) \emph{Minimum signular value}: 
The minimum singular value of a matrix $\G$ can be written as
\begin{align}
\sigma_{\min}(\G):=\min_{\|\ab\|_2=1} \|\G\ab\|_2 = \min_{\|\ab\|_2=1} \max_{\|\bb\|_2= 1} \bb^T\G\ab = \Phi(\G,\Sc^{n-1},\Sc^{m-1},0),\label{eq:s_min}
\end{align}
where we have denoted $\Sc^{k-1}$ for  the unit sphere in $\R^k$.

\vp
ii) \emph{Minimum conic singular value}: 
Let $\Cc\subset\Sc^{n-1}$ be a subset of the unit sphere in $\R^n$. The  minimum conic singular value of a matrix $\G$ with respect to $\Cc$ is defined as (e.g. \cite{tropp2014convex})
$$
\sigma_{\min,\Cc}(\G):=\min_{\ab\in\Cc} \|\G\ab\|_2.
$$
This can be expressed in the format of the (PO) problem in \eqref{eq:1} as follows:
\begin{align}
\sigma_{\min,\Cc}(\G)= \min_{\ab\in\Cc} \max_{\|\bb\|_2=1} \bb^T\G\ab = \Phi(\G,\Cc,\Sc^{m-1},0).\label{eq:s_min2}
\end{align}

\vp
iii) \emph{LASSO}: 
The LASSO is a popular algorithm in the statistics literature \cite{TibLASSO,chen1998atomic} and is commonly used to recover a sparse signal $\x_0\in\R^n$ from a vector of noisy linear observations $\y=\G\x_0+\z\in\R^m$. It produces an estimate $\hat\x$ of $\x_0$ as the solution to the following convex optimization program\footnote{The formulation of the LASSO considered in \eqref{eq:intro_LASSO} is known as the ``square-root" \cite{Belloni} or $\ell_2$ \cite{OTH} LASSO. This is slightly different that the most well known formulation, $\min_{\x} (1/2)\|\G\x-\y\|_2^2 + \la \|\x\|_1$. Our theory applies to both variants,  as shown in Section \ref{sec:app}. }:
$$
\min_{\x} \|\y-\G\x\|_2 + \la \|\x\|_1,
$$
where $\la\geq 0$ is a regularization parameter. The first term $\|\y-\G\x\|_2$ attempts to fit $\x$ to the vector of observations $\y$ in a least-squares sense.
 The role of the regularizer $\|\x\|_1$ is to promote the sparsity of the solution. Observe that we can express the LASSO optimization problem as a (PO) problem as follows:
\begin{align}
\min_{\x} \|\G\x-\y\|_2 + \la \|\x\|_1 &= \min_{\x} \|\G(\x-\x_0)-\z\|_2 + \la \|\x\|_1 = \min_{\x}\max_{\|\ub\|_2\leq 1} \ub^T(\G(\x-\x_0)-\z) + \la \|\x\|_1\nn
\\&= \min_{\w}\max_{\|\ub\|_2\leq 1} \ub^T(\G\w-\z) + \la \|\x_0+\w\|_1 = \Phi(\G,\R^n,\Bc^{m},\psi),\label{eq:intro_LASSO}
\end{align}
where $\psi(\w,\ub) = -\ub^T\z + \la \|\x_0+\w\|_1$ and $\Bc^{m}$ is the unit ball in $\R^m$. In the above formulation, we have introduced the error vector $\w=\x-\x_0$ as the optimization variable. As explained later, our purpose will be to analyze the error performance of the LASSO, in which case this substitution becomes convenient.

\vp
iv) \emph{Estimating structured signals from noisy linear observations}: 
Consider the same setup as in the LASSO paradigm, i.e. estimating an unknown signal $\x_0$ from noisy linear observations $\y=\G\x_0 + \z$.  The LASSO can be viewed as a specific instance of a wide class of estimators that are obtained via solving:
\begin{align}
\min_{\x} \loss(\G\x-\y) + \la f(\x),\label{eq:intro_gen}
\end{align}
where, $\loss:\R^m\rightarrow\R$ is a \emph{convex} ``loss function" that penalizes the residual $\y-\A\x$. Typical examples include $\|\cdot\|_2,\|\cdot\|_2^2,\|\cdot\|_1$,etc.. On the other hand,  $f:\R^n\rightarrow\R$ is a regularizer function that aims to promote the specific structure of $\x_0$, e.g. $\ell_1$-norm if $\x_0$ is sparse, or nuclear-norm if $\x_0$ is a low-rank matrix in $\R^{\sqrt{n}\times\sqrt{n}}$. Letting $\loss^*:\R^m\rightarrow\R$ denote the conjugate function of $\loss(\cdot)$, and $\w=\x-\x_0$ denote the error vector, we can rewrite \eqref{eq:intro_gen} as a (PO) problem  as follows:
\begin{align}
\min_{\x} \loss(\G\x-\y) + \la f(\x) = \min_{\w}\sup_{\ub} \ub^T\A\w\underbrace{-\ub^T\z - \loss^*(\ub)+ \la\rr(\x_0+\w)}_{\psi(\w,\ub)} = \Phi(\G,\R^n,\R^m,\psi)\label{eq:intro_general_2}.
\end{align}
Here, we have used duality $\loss(\vb) = \sup_{\ub}\ub^T\vb-\loss^*(\ub)$; details are deferred to Section \ref{sec:app_gen}

\vspace{2pt}
\vp
In all the above instances consider $\G$ in \eqref{eq:1} being randomly drawn from a probability distribution. Then, $\Ps_{\x,\y} :=\y^T\G\x+\psi(\x,\y)$ can be viewed as a random process indexed on $\Sc_\x\times\Sc_\y$ and $\Phi(\G,\Sc_\x,\Sc_\y,\psi)$ is a random variable denoting its min-max value. Our goal is to understand the  distribution properties of $\Phi(\G,\Sc_\x,\Sc_\y,\psi)$. With this, we hope to be able to answer questions of the following flavor:
\begin{itemize}
\item[-] \emph{What is the minimum (conic) singular value of a random matrix?}
\item[-] \emph{How does the optimal cost of the LASSO,  or that of the general estimator in \eqref{eq:intro_general_2}, behave when the measurement matrix is random? More importantly, can we derive expressions for the reconstruction error $\|\hat{\x}-\x_0\|$ ? }
\end{itemize}
In particular, we restrict attention to the generic scenario in which the entries of $\G$ are i.i.d. standard normal $\Nn(0,1)$. Admittedly, this is a very special case of possible distributions of $\G$; in a large extent this is  driven by the fact that it allows us to rely on some remarkable properties that govern the Gaussian ensemble. However, it should be noted that many of relevant results obtained in random matrix theory for the Gaussian ensemble enjoy a \emph{universality} property, i.e. they actually hold for a wider class of probability distributions. Please refer to the discussion in Section \ref{sec:app_tight} and to \cite{korada2011applications,bayati2012universality,donoho2009observed} for instances of the universality property arising in the context of recovery of structured signals.

\subsection{The Gaussian min-max Theorem (GMT)}

Let $\g\in\R^m$ and $\h\in\R^n$ have entries i.i.d. $\Nn(0,1)$\footnote{Henceforth, we reserve notation $\G\in\R^{m\times n}$, $\g\in\R^m$ and $\h\in\R^n$ to denote r.v.s. of corresponding size and entries i.i.d. $\Nn(0,1)$.} and consider the following min-max optimization problem:
\begin{align}
(AO)\qquad\qquad\Lc(\g,\h,\Sc_\x,\Sc_y,\psi) &:=  \min_{\x\in\Sc_\x}\max_{\y\in\Sc_\y}~ \|\x\|_2\g^T\y + \|\y\|_2\h^T\x + \f({\x,\y}) 
.\label{eq:2}
\end{align}
$\Rs_{\x,\y}:=\|\x\|_2\g^T\y + \|\y\|_2\h^T\x + \f({\x,\y})$ is a gaussian random process indexed on $\Sc_\x\times\Sc_\y$ and $\phi(\g,\h,\Sc_\x,\Sc_\y,\psi)$ denotes its min-max value. 

We shall refer to the optimization problem in \eqref{eq:1} as the \emph{\bf{Auxiliary Optimization}} (AO) problem; it will be shown soon that the (AO) is closely related to the (PO).
At first glance it might seem that the two  min-max problems are somewhat unrelated to allow for this. Focusing on their objective functions, i.e. the Gaussian processes $\Ps_{\x,\y}$ and $\Rs_{\x,\y}$, it is seen that $\E_\G \Ps_{\x,\y} = \E_{\g,\h}\Rs_{\x,\y} = \psi(\x,\y)$. However, they don't even have the same variance, which would be a good indication that there might actually exist some relation\footnote{the intuition being that in the Gaussian ensemble, the first two moments are enough to capture the fundamentals of probabilistic behavior.}. Yet, Theorem \ref{lem:Gor} below, which is due to Gordon \cite{gorLem} and is known as the \emph{Gaussian min-max theorem} (GMT), shows that the two problems are indeed related. In particular, the theorem considers a slight modification of the (PO), as follows:
\begin{align}
\tilde\Gc(\G,g,\Sc_\x,\Sc_\y,\psi) &:=  \min_{\x\in\Sc_\x}\max_{\y\in\Sc_\y}~ \y^T\G\x + g\|\x\|_2\|\y\|_2 + \f({\x,\y}),\label{eq:3}
\end{align}
and, it relates that one to the (AO) problem. Observe that the modified Gaussian process involved in \eqref{eq:3} has now not only the same mean, but also the same variance as $\Rs_{\x,\y}$.
  For completeness, we include some background and a proof the GMT in Appendix \ref{sec:app_Gor}. Also, to simplify notation, we occasionally drop the arguments $\Sc_\x$, $\Sc_\y$ and $\psi$ from the defined variables in \eqref{eq:1}, \eqref{eq:2} and \eqref{eq:3}.
\begin{thm}[Gaussian min-max theorem (GMT) \cite{gorLem}\footnote{The version presented here is only a slight
modification   of the original result \cite[Lemma 3.1]{gorLem}.  In contrast to Theorem \ref{lem:Gor}, the original result in \cite[Lemma 3.1]{gorLem} assumes $\Sc_\x$ to be arbitrary (not necessarily compact) subset of $\R^m$, $\Sc_\y$ is restricted to be the unit sphere in $\R^n$ and $\psi(\cdot,\cdot)$ is only a function of $\x$.} ]
\label{lem:Gor}
Consider $\tilde\Phi(\G,g)$ and $\phi(\g,\h)$ as defined in \eqref{eq:3} and \eqref{eq:2}, respectively. 
Let $\G\in\mathbb{R}^{{m}\times {n}}$, $g\in\R$, $\g\in\mathbb{R}^{m}$ and $\h\in\mathbb{R}^{n}$ all have  entries i.i.d. $\Nn(0,1)$, $\Sc_\x\subset\R^{n}$, $\Sc_\y\subset\R^{m}$ be compact sets and $\psi(\cdot,\cdot):\Sc_\x\times\Sc_\y\rightarrow \R$ be continuous. 
Then, for any $c\in\mathbb{R}$,
$$\Pro(  \tilde\Gc(\G,g) < c ) \leq 
\Pro\left(  \Lc(\g,\h) \leq c  \right).$$
\end{thm}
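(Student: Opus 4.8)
The plan is to read \eqref{eq:3} and \eqref{eq:2} as min--max functionals of two centered Gaussian processes sharing the common deterministic mean $\psi$, and to compare them through Gordon's Gaussian comparison inequality (the min--max Slepian-type lemma, provable by Kahane interpolation). Concretely, I would set $X_{\x,\y}:=\y^T\G\x + g\|\x\|_2\|\y\|_2$ and $Y_{\x,\y}:=\|\x\|_2\g^T\y + \|\y\|_2\h^T\x$, so that $\tilde\Gc(\G,g)=\min_{\x\in\Sc_\x}\max_{\y\in\Sc_\y}\big(X_{\x,\y}+\psi(\x,\y)\big)$ and $\Lc(\g,\h)=\min_{\x\in\Sc_\x}\max_{\y\in\Sc_\y}\big(Y_{\x,\y}+\psi(\x,\y)\big)$. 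Since $\psi$ is deterministic it only shifts the mean and leaves every covariance untouched, so it suffices to compare the centered processes $X$ and $Y$.

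First I would compute the two covariance kernels. Using $\E[G_{ij}G_{kl}]=\delta_{ik}\delta_{jl}$ together with the mutual independence of $g,\g,\h$ (which kills all cross terms), a direct calculation gives
\[
\E[X_{\x,\y}X_{\x',\y'}] = (\x^T\x')(\y^T\y') + \|\x\|_2\|\x'\|_2\|\y\|_2\|\y'\|_2,
\]
\[
\E[Y_{\x,\y}Y_{\x',\y'}] = \|\x\|_2\|\x'\|_2(\y^T\y') + (\x^T\x')\|\y\|_2\|\y'\|_2.
\]
Setting $\x'=\x,\y'=\y$ shows both processes have variance $2\|\x\|_2^2\|\y\|_2^2$, confirming the equal-variance claim made in the text.

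Next I would verify Gordon's hypotheses, treating $\x$ as the outer minimization index and $\y$ as the inner maximization index. For a fixed outer index $\x=\x'$ the two kernels coincide identically, so the ``same outer index'' correlations are equal (hence both inequalities hold). For distinct outer indices $\x\neq\x'$ the difference factors cleanly as
\[
\E[X_{\x,\y}X_{\x',\y'}] - \E[Y_{\x,\y}Y_{\x',\y'}] = \big(\x^T\x' - \|\x\|_2\|\x'\|_2\big)\big(\y^T\y' - \|\y\|_2\|\y'\|_2\big),
\]
a product of two nonpositive factors by Cauchy--Schwarz, hence nonnegative. This is exactly the sign pattern (equal variances; nonincreasing correlation within a fixed outer index; nondecreasing correlation across outer indices) under which the comparison lemma yields, for every threshold $c$, the inequality $\Pro\big(\bigcap_{\x}\bigcup_{\y}\{X_{\x,\y}+\psi\geq c\}\big)\geq\Pro\big(\bigcap_{\x}\bigcup_{\y}\{Y_{\x,\y}+\psi\geq c\}\big)$.

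Finally I would translate this back by complementation. Because $\Sc_\x,\Sc_\y$ are compact and $\psi$ continuous the inner max and outer min are attained, so $\{\tilde\Gc(\G,g)\geq c\}=\bigcap_{\x}\bigcup_{\y}\{X_{\x,\y}+\psi\geq c\}$ and likewise $\{\Lc(\g,\h)\geq c\}=\bigcap_{\x}\bigcup_{\y}\{Y_{\x,\y}+\psi\geq c\}$; the comparison above therefore reads $\Pro(\tilde\Gc(\G,g)\geq c)\geq\Pro(\Lc(\g,\h)\geq c)$, and taking complements gives $\Pro(\tilde\Gc(\G,g)<c)\leq\Pro(\Lc(\g,\h)<c)\leq\Pro(\Lc(\g,\h)\leq c)$, which is the asserted bound (in fact a slightly stronger, strict-inequality, version). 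The main obstacle I anticipate is not the algebra but the passage from the finite-index comparison lemma to the continuous, compact index sets: I would first prove the statement on finite $\varepsilon$-nets of $\Sc_\x\times\Sc_\y$, check that the covariance sign conditions persist verbatim on any finite subfamily, and then pass to the limit using continuity of the sample paths and of $\psi$ to recover the events for the full compact sets. The equal-variance identity and the $(\x^T\x'-\|\x\|_2\|\x'\|_2)(\y^T\y'-\|\y\|_2\|\y'\|_2)$ factorization are the two computations that make the whole comparison go through.
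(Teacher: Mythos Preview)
Your proposal is correct and follows essentially the same route as the paper's own proof in Appendix~\ref{sec:app_Gor}: define the two centered Gaussian processes, verify Gordon's covariance hypotheses via the same factorization $(\x^T\x'-\|\x\|_2\|\x'\|_2)(\y^T\y'-\|\y\|_2\|\y'\|_2)$, apply the abstract comparison theorem on finite index sets, and then pass from $\varepsilon$-nets to the full compact sets. The only cosmetic difference is that the paper labels the $\G$-process as $Y$ and the $(\g,\h)$-process as $X$ (the reverse of your convention), and it carries out the $\varepsilon$-net limiting argument in full detail using Lipschitz bounds and concentration of $\|\G\|_2,|g|,\|\g\|_2,\|\h\|_2$, whereas you only flag that step as the main technical obstacle.
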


In words, the lower tail probability of $\tilde\Phi(\G,g)$ is upper bounded by that of $\phi(\g,\h)$. This is a remarkable result, but not exactly serving our goal of understanding the properties of the optimal cost and optimal value of the (PO). Yet, with a simple trick we can show (cf. Section \ref{sec:proof}) that $\Pro( \Gc(\G) < c )\leq 2\Pro(  \tilde\Gc(\G,g) < c )$. Hence,
 \begin{align}\label{eq:Gor}
\Pro(  \Gc(\G) < c ) \leq 
2\Pro\left(  \Lc(\g,\h) \leq c  \right).
\end{align}
The message of the theorem translated to our setup is simple:
\begin{center}
{ \emph{  if $c$ is a high probability \textbf{lower} bound on the optimal cost of the (AO)\footnote{in the sense that $\Pro\left(\Lc(\g,\h)\leq c\right)$ is close to zero}, so it is for the optimal cost of the (PO). }}
\end{center}
 The most well-known application of this remarkable result is for the purpose of lower-bounding the minimum singular value of an $m\times n, m>n$ Gaussian matrix $\G$ (e.g. \cite{vershynin2010introduction}; see also \cite{Cha,tropp2014convex} for similar calculations regarding the minimum conic singular value). It is easily verified that the (AO) problem corresponding to \eqref{eq:s_min} is:
\begin{align}
\phi(\g,\h,\Sc^{n-1},\Sc^{m-1},0) = \|\g\|_2 - \|\h\|_2.
\end{align}
Standard concentration results (see Lemma \ref{lem:conc1}) show that the event $\|\g\|_2 - \|\h\|_2\leq \sqrt{m}-\sqrt{n}-t$ holds with probability at most $2\exp(-t^2/4)$ for all $t>0$. Hence, from \eqref{eq:Gor} and \eqref{eq:s_min}, we conclude that
\begin{align}
\Pro(\sigma_{\min}(\G)<\sqrt{m}-\sqrt{n}-t) \leq 4\exp(-t^2/4).\label{eq:Ver}
\end{align}
This example clearly suggests that the (AO) problem is in general simpler to analyze than the corresponding (PO). 

\subsection{Our Contribution}

A natural question that arises concerns the \emph{tightness} of the bounds obtained via Theorem \ref{lem:Gor}. To become explicit, suppose
 that $\Lc(\g,\h)$ concentrates around some constant $\mu$, in the sense that
for all $t>0$ , the events
$$
\left\{ \Lc(\g,\h) \leq \mu - t \right\} \quad \text{ and } \quad \left\{ \Lc(\g,\h) \geq \mu + t \right\},
$$
each occurs with \emph{low} probability.
Then, $\mu-t$ is a high-probability lower bound to $\Lc(\g,\h)$. This bound is also \emph{tight} in the sense that it is accompanied by a corresponding high-probability upper bound, namely $\mu+t$, whose value can be made arbitrarily close to the former.
Theorem \ref{lem:Gor} implies that $\mu-t$ is also a high-probability lower bound on $\Gc(\G)$. But, it gives \emph{no} information on how much $\Gc(\G)$ is allowed to deviate from this.

In  this note, we show that under additional \emph{convexity} assumptions on the sets $\Sc_\x$, $\Sc_\y$ and on the function $\psi(\cdot,\cdot),$ Theorem \ref{lem:Gor} is tight in the sense discussed above. We essentially prove that in the presence of convexity, the following counterpart of \eqref{eq:Gor} is true for all $c\in\R$:
 \begin{align}\label{eq:Gor2}
\Pro(  \Gc(\G) < c ) \leq 
2\Pro\left(  \Lc(\g,\h) \leq c  \right).
\end{align}
In words,
{\begin{center}
{ \emph{  if $c$ is a high probability \textbf{upper} bound on the optimal cost of the (AO), so it is for the optimal cost of the (PO). }}
\end{center}
}
\noindent Combining \eqref{eq:Gor} and \eqref{eq:Gor2} we conclude that under the appropriate convexity assumptions, for all $\mu\in\R$ and $t>0$:
\begin{align}\nn
\Pro\left( |\Ph - \mu| > t \right) \leq 2 \Pro\left( |\ph - \mu| > t \right).
\end{align}

Unfortunately, the constraint sets $\Sc_\x$, $\Sc_\y$ involved in \eqref{eq:s_min} and \eqref{eq:s_min2} are not convex. Hence, our result does not apply to upper bounding the minimum singular values\footnote{We remark, however, that  the lower bound obtained  in \eqref{eq:Ver} for $\sigma_{\min}(\G)$ is indeed \emph{asymptotically} tight in the regime $n/m\rightarrow(0,1), n\rightarrow\infty$ by the Bai-Yin's law \cite{bai1993limit}.}.
Yet, our result applies to the analysis of \eqref{eq:intro_general_2} and may have other potential applications. In \eqref{eq:intro_LASSO}, the principal objective is not characterizing the optimal cost of the optimization, but rather, its optimal minimizer $\hat\x$ and concluding about the achieved reconstruction error $\|\hat\x-\x_0\|_2$. With this serving as our motivation,  we prove 
 that in an asymptotic setting where the problem dimensions $m$ and $n$ grow to infinity, and under proper assumptions, 
 \begin{align}
 \|\x_\Phi(\G)\| \approx \|\x_\phi(\g,\h)\|\label{eq:Gor3}.
 \end{align}
Here,  $\x_\Phi(\G)$ and $\x_\phi(\g,\h)$ denote the optimal minimizers in the (PO) and (AO) problems, respectively.
In Section \ref{sec:app} we apply Theorem \ref{thm:main} to  derive a unifying framework for the asymptotically \emph{exact} error performance analysis of  non-smooth convex algorithms that can be cast in \eqref{eq:intro_general_2}. We point references to our accompanying series of works, which applies the framework to specific problem instances such as the LASSO or the regularized Least Absolute Deviations (LAD) algorithm.

This work is highly motivated and inspired by recent work of Stojnic \cite{stojnic2013meshes,stojnic2013upper,stojnic2013spherical};  we discuss connection and relevance to this  and other literature primarily in Section \ref{sec:rel} and throughout the main body of the paper. 
The rest of the paper is organized as follows. In Section \ref{sec:main} we state and discuss our main result Theorem \ref{thm:main}. The theorem consists of three statements corresponding to \eqref{eq:Gor}, \eqref{eq:Gor2} and \eqref{eq:Gor3}, respectively. The proof is included in Section \ref{sec:proof}. We use the machinery of Theorem \ref{thm:main} in Section \ref{sec:app} in order to evaluate the estimation performance of \eqref{eq:intro_general_2}.
 The paper concludes in Section \ref{sec:rel} with a discussion of the relevant work. Some of the technical proofs and further discussions are deferred to the Appendix.

\section{The convex Gaussian min-max Theorem}\label{sec:main}

\subsection{Some Notation}\label{sec:notation}

\begin{defn}[GMT admissible sequence] \label{def:ad}The sequence $\{ \G^\mn, \g^\mn, \h^\mn, \Sc_\x^\mn, \Sc_\y^\mn, \psi^\mn(\cdot,\cdot) \}_{n\in\mathbb{N}}$ indexed by n is said to be a \emph{GMT admissible sequence} if $m=m(n)$ and if for all $n\in\mathbb{N}$:
\begin{itemize}
\item $\G^\mn\in\R^{m\times n}, \h^\mn\in\R^n, \g^\mn\in\R^m$, 
\item $\Sc_\x^\mn\subset\R^n$ and $\Sc_\y^\mn\subset\R^m$ are compact sets, 
\item $\psi^\mn:\Sc_\x^\mn\times\Sc_\y^\mn\rightarrow\R$ is a continuous 
function.
\end{itemize}
Onwards, we will drop the $(n)$ superscript from $\G^\mn$,$\g^\mn$ and $\h^\mn$ to simplify notation somewhat. 
\end{defn}

\noindent A sequence $\{ \G^\mn, \g^\mn, \h^\mn, \Sc_\x^\mn, \Sc_\y^\mn, \psi^\mn(\cdot,\cdot) \}_{n\in\mathbb{N}}$ defines a sequence of min-max optimization problems as in \eqref{eq:1} and \eqref{eq:2}, i.e.,
\begin{subequations}
\begin{align}
\Phi^{(n)}(\G) &= \min_{\x\in\Sc_\x^\mn}\max_{\y\in\Sc_\y^\mn} \y^T\G\x + \psi^\mn(\x,\y),\label{eq:1a}\\
\phi^{(n)}(\g,\h) &= \min_{\x\in\Sc_\x^\mn}\max_{\y\in\Sc_\y^\mn} \|\x\|_2\g^T\y + \|\y\|_2\h^T\x+ \psi^\mn(\x,\y)\label{eq:2a}.
\end{align}
\end{subequations}
We correspondingly refer to those as the (PO) and (AO) problems.  Let the corresponding optimal  minimizers of those problems be denoted as $\x_\Phi^\mn(\G)$ and $\x_\phi^\mn(\g,\h)$, respectively. It is convenient for the statement of the theorem to define the sequence of (random) functions   $\xfun:\Sc_\x^\mn\rightarrow \R$:
  \begin{align}\label{eq:xfun}
  \xfunx = \max_{\y\in\Sc_\y^\mn} \|\x\|_2\g^T\y + \|\y\|_2\h^T\x+ \psi^\mn(\x,\y).
  \end{align}
  Clearly, $\Lc^\mn(\g,\h) = \min_{\x\in\Sc_\x^\mn}\xfunx$.
  


 For a sequence of random variables $\{\mathcal{X}^\mn\}_{n\in\mathbb{N}}$ and $c\in\R$, we use standard notation $\mathcal{X}^\mn\rP c$ to denote convergence in probability to $c$, i.e. for all $\eps>0$, $\lim_{n\rightarrow\infty}\Pro\left(|\mathcal{X}^{(n)} - c|>\eps\right)=0$. Similarly, for a deterministic sequence $\{x^\mn\}_{n\in\mathbb{N}}$ we write $x^\mn\rD c$ for $\lim_{n\rD\infty}x^\mn=c,~c\in\R$.

 

\subsection{Result}\label{sec:result}

We are now ready to state our main result in Theorem \ref{thm:main}.

\begin{thm}[convex Gaussian min-max theorem (cGMT)]\label{thm:main}
Let $\{ \G^\mn, \g^\mn, \h^\mn, \Sc_\x^\mn, \Sc_\y^\mn, \psi^\mn(\cdot,\cdot) \}_{n\in\mathbb{N}}$ be a GMT admissible sequence as in Definition \ref{def:ad}, for which the entries of $\G,\h$ and $\g$ are drawn i.i.d. $\Nn(0,1)$.
Consider $\Phi^\mn(\G)$ and $\phi^\mn(\g,\h)$ as in \eqref{eq:1a} and \eqref{eq:2a}, and, $\x_\Phi^{(n)}(\G)$ and $\x_\phi^{(n)}(\g,\h)$ optimal minimizers therein. 
The following three statements hold.
\begin{enumerate}[(i)]
\item For any $n\in\mathbb{N}$ and $c\in\mathbb{R}$,
\begin{align}\label{eq:primal}
\Pro\left(  \Gc^\mn(\G) < c\right) \leq 2\Pro\left(  \Lc^\mn(\g,\h) \leq c \right).
\end{align}
\item If $\Sc_\x^\mn$, $\Sc_\y^\mn$ are convex and  $\psi^\mn(\cdot,\cdot)$ is convex-concave on $\Sc_\x^\mn\times\Sc_\y^\mn$, then, for any $n\in\mathbb{N}$, $\mu\in\mathbb{R}$ and $t>0$: 
%
\begin{align}\label{eq:dual}
\Pro\left( |\Gc^\mn(\G) - \mu| > t \right) \leq 2 \Pro\left( |\Lc^\mn(\g,\h) - \mu| \geq t \right).
\end{align}
\item Assume the conditions of (ii) hold for all $n\in\mathbb{N}$. Let $\|\cdot\|$ denote some norm in $\R^n$ and recall \eqref{eq:xfun}. If, there exist deterministic values (independent of $m,n$) $d_*,\alpha_*$ and $\tau>0$ such that
\begin{enumerate}[(a)]
\item $\Lc^\mn(\g,\h) \rP d_*$,
\item $\| \x^\mn_\phi(\g,\h)\| \rP \alpha_*$,
\item with probability 1 in the limit $n\rightarrow\infty$: $ \{ \xfunx \geq \Lc^\mn(\g,\h) + \tau ( \| \x \| - \| \x^\mn_\phi(\g,\h) \| )^2, ~\forall \x\in\Sc_\x^\mn \}$,
\end{enumerate}
then,
\begin{align}\label{eq:norms}
\| \xPn \| \rP \alpha_*.
\end{align}

\end{enumerate}
\end{thm}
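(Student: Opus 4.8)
The plan is to \emph{localize} the (PO) minimizer by comparing the cost of the full (PO) with that of a \emph{restricted} (PO) in which $\x$ is forced to lie away from norm $\alpha_*$. Fix $\eps>0$ and set $\eta := \tau\eps^2/4>0$. Define the restricted constraint set $\Sc_{\x,\eps}^\mn := \{\x\in\Sc_\x^\mn : |\,\|\x\|-\alpha_*\,|\geq\eps\}$, which is again compact since $\Sc_\x^\mn$ is compact and the constraint is closed, and let $\Gc_\eps^\mn(\G)$ and $\Lc_\eps^\mn(\g,\h)$ denote the costs obtained by replacing $\Sc_\x^\mn$ with $\Sc_{\x,\eps}^\mn$ in \eqref{eq:1a} and \eqref{eq:2a}. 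The key observation is that $\{\G,\g,\h,\Sc_{\x,\eps}^\mn,\Sc_\y^\mn,\psi^\mn\}$ is itself a GMT admissible sequence, so statement (i) applies verbatim to the restricted pair (no convexity being needed there).

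First I would lower bound the restricted (AO) cost using the hypotheses. On the intersection of the three high-probability events furnished by (a)--(c) --- namely $\{|\Lc^\mn(\g,\h)-d_*|<\eta/4\}$, $\{|\,\|\xpn\|-\alpha_*\,|<\eps/2\}$, and the sharpness event in (c) --- every $\x\in\Sc_{\x,\eps}^\mn$ satisfies $|\,\|\x\|-\|\xpn\|\,|>\eps/2$ by the triangle inequality, whence $\xfunx > \Lc^\mn(\g,\h)+\tau(\eps/2)^2 > d_*+3\eta/4$. Minimizing over $\Sc_{\x,\eps}^\mn$ gives $\Lc_\eps^\mn(\g,\h)\geq d_*+3\eta/4$ on an event whose probability tends to $1$, so $\Pro(\Lc_\eps^\mn(\g,\h)\leq d_*+\eta/2)\rD 0$. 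Feeding $c=d_*+\eta/2$ into statement (i) for the restricted pair then yields $\Pro(\Gc_\eps^\mn(\G)< d_*+\eta/2)\rD 0$; i.e. the restricted (PO) cost is bounded below by $d_*+\eta/2$ with high probability.

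Next I would show the \emph{unrestricted} (PO) cost concentrates at $d_*$. Applying statement (ii) with $\mu=d_*$ gives $\Pro(|\Gc^\mn(\G)-d_*|>t)\leq 2\Pro(|\Lc^\mn(\g,\h)-d_*|\geq t)$, and the right-hand side vanishes for every $t>0$ by (a); hence $\Gc^\mn(\G)\rP d_*$ and in particular $\Pro(\Gc^\mn(\G)> d_*+\eta/4)\rD 0$. Combining the two bounds, with probability tending to $1$ we have simultaneously $\Gc_\eps^\mn(\G)\geq d_*+\eta/2 > d_*+\eta/4 \geq \Gc^\mn(\G)$, i.e. the strict gap $\Gc_\eps^\mn(\G)>\Gc^\mn(\G)$. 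But if some (PO) minimizer $\xPn$ had $|\,\|\xPn\|-\alpha_*\,|\geq\eps$, then $\xPn\in\Sc_{\x,\eps}^\mn$, forcing $\Gc_\eps^\mn(\G)\leq\Gc^\mn(\G)$ --- a contradiction. Therefore $\Pro(|\,\|\xPn\|-\alpha_*\,|\geq\eps)\rD 0$, and since $\eps>0$ is arbitrary this is exactly \eqref{eq:norms}.

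The main obstacle is architectural rather than computational: one must recognize that statement (i) can be applied to the \emph{restricted} problem to convert the sharpness hypothesis (c) into a genuine lower bound on the restricted (PO) cost, while statement (ii) pins the \emph{full} (PO) cost to $d_*$, so that the two estimates squeeze the minimizer's norm. Care is also needed in the bookkeeping of the three high-probability events, and in the degenerate case where $\Sc_{\x,\eps}^\mn$ is empty (in which case the localization conclusion holds trivially). A final point worth noting is that the cost-gap argument localizes \emph{every} optimizer, so the conclusion \eqref{eq:norms} is insensitive to how $\xPn$ is selected when there are multiple minimizers.
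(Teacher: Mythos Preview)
Your proposal is correct and follows essentially the same strategy as the paper: apply statement~(i) to the \emph{restricted} (PO)/(AO) pair (with $\|\x\|$ forced away from $\alpha_*$) to lower-bound the restricted (PO) cost via the sharpness hypotheses (a)--(c), apply statement~(ii) to pin the \emph{full} (PO) cost near $d_*$, and combine the two to produce a strict cost gap that rules out minimizers outside the $\eps$-neighborhood. The paper's proof (Appendix) carries out exactly this two-step comparison with slightly different bookkeeping of the constants $\eps_1,\eps_2,\delta$; your handling of the empty-restricted-set case and the remark on multiple minimizers are minor clarifications the paper omits.
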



The probabilities are taken with respect to the randomness of $\G$, $\g$ and $\h$.
 A detailed discussion of the statements of the theorem follows in Section \ref{sec:discussion} below.





\subsection{Proof}\label{sec:proof}

Here, we only prove statements (i) and (ii) of Theorem \ref{thm:main}. The proofs are short and insightful. The proof of the third statement is not involved either, but essentially relies on the first two statements and requires some more space; thus, is deferred to Appendix \ref{sec:app_thm_proof}.
In what follows, we fix arbitrary $n\in\mathbb{N}$ and drop the superscript $(n)$ to simplify notation.


\vp\noindent\underline{Proof of \eqref{eq:primal}:}
 As discussed inequality \eqref{eq:primal} is an almost direct consequence of  Theorem \ref{lem:Gor}.
Yet we need to get rid of the term ``$g\|\x\|_2\|\y\|_2$" in \eqref{eq:Gor} in Gordon's Theorem \ref{lem:Gor}. The argument is rather simple but critical for the rest of the statements of Theorem \ref{thm:main}.
We will show that 
\begin{align}\label{eq:symGoal}
 \Pro\left(\Gc(\G)\leq c\right) \leq 2 \Pro\left(\tilde\Gc(\G,g)\geq c\right).
\end{align}
Once this is established, \eqref{eq:primal} follows directly after applying Theorem \ref{lem:Gor}.

To prove \eqref{eq:symGoal}, fix $\G$ and $g<0$ and denote
$$
f_1(\x,\y) = \y^T\G\x+ \psi({\x,\y}) \quad\text{ and }\quad f_2(\x,\y) = \y^T\G\x + g\|\x\|_2\|\y\|_2 + \psi({\x,\y}).
$$
 Clearly, $f_1(\x,\y)\geq f_2(\x,\y)$ for all $(\x,\y)\in\Sc_\x\times\Sc_\y.$
We may then write,
\begin{align}
\min_{\x\in\Sc_\x}\max_{\y\in\Sc_\y} f_1(\x,\y) = f_1(\x_1,\y_1) &\geq f_1(\x_1,\y) \text{ for all } \y\in\Sc_\y\nn\\
&\geq \max_{\y\in\Sc_\y} f_2(\x_1,\y) \geq \min_{\x\in\Sc_\x}\max_{\y\in\Sc_\y} f_2(\x,\y). \nn
\end{align}
This proves that
 $\Gc(\G)\geq \tilde\Gc(\G,g)$, when $g<0$. From this and from independence of $g$ and $\G$,
 for all $c\in\R$:
\begin{align*}
\Pro\left(\tilde\Gc(\G,g)\leq c ~|~ g<0\right) \geq \Pro\left(\Gc(\G)\leq c ~|~ g<0\right) = \Pro(\Gc(\G)\leq c).
\end{align*}
When combined with $g\sim\Nn(0,1)$, the above yields the desired inequality \eqref{eq:symGoal}:
\begin{align*}
\Pro\left(\tilde\Gc(\G,g)\leq c\right) &= \frac{1}{2}\Pro\left(\tilde\Gc(\G,g)\leq c ~|~ g>0\right) + \frac{1}{2}\Pro\left(\tilde\Gc(\G,g)\leq c ~|~ g<0\right) \nn 
\geq \frac{1}{2} \Pro(\Gc(\G)\leq c).
\end{align*}


\vp\noindent\underline{Proof of \eqref{eq:dual}:} 
The additional convexity assumptions imposed in  statement (ii) of the theorem are critical for the proof of \eqref{eq:dual}. By assumption, the sets $\Sc_\x,\Sc_\y$ are non-empty compact and convex. Furthermore, the function $\y^T\G\x + \f({\x,\y})$ is continuous, finite\footnote{A continuous function on a compact set is bounded from Weierstrass extreme value theorem.} and convex-concave on $\Sc_\x\times\Sc_\y$. Thus, we can apply the minimax result in \cite[Corollary 37.3.2]{Roc70} to exchange ``$\min$-$\max$" with a ``$\max$-$\min$" in \eqref{eq:1a}:
$$
\Gc(\G) = \max_{\y\in\Sc_\y}\min_{\x\in\Sc_\x} \y^T\G\x + \f({\x,\y}).
$$
It is convenient to rewrite the above as
\begin{align*}
-\Gc(\G) = \min_{\y\in\Sc_\y}\max_{\x\in\Sc_\x} -\y^T\G\x - \f({\x,\y}).
\end{align*}
Then, using the symmetry of $\G$, we have that for any $c\in\R$:
\begin{align*}
\Pro\left( -\Gc(\G) \leq c \right) = \Pro\left( \min_{\y\in\Sc_\y}\max_{\x\in\Sc_\x} \left\{\y^T\G\x - \f({\x,\y})\right\} \leq c \right).
\end{align*}
Thus, we may apply\footnote{Observe that the signs of  $\y^T\G\x$, $\g^T\y$ and $\h^T\x$ do not matter because of the assumed symmetry in the distributions of $\G,\g$ and $\h$.
} statement (i) of Theorem \ref{thm:main} (with the roles of $\x$ and $\y$ flipped):
\begin{align}
\Pro\left( -\Gc(\G)< c \right) &\leq 2\Pro\left(  \min_{\y\in\Sc_\y}\max_{\x\in\Sc_\x}\left\{  \|\y\|_2 \h^T\x + \|\x \|_2 \g^T\y  - \f({\x,\y})  \right\}\leq c  \right) \nn \\
&=2\Pro\left(  \min_{\y\in\Sc_\y}\max_{\x\in\Sc_\x}\left\{ - \|\y\|_2 \h^T\x -\|\x \|_2 \g^T\y - \f({\x,\y})  \right\}\leq c  \right),\label{eq:les?}
\end{align}
where the last equation follows because of the symmetry of $\g$ and $\h$. To continue, note that
$$
\min_{\y\in\Sc_\y}\max_{\x\in\Sc_\x}\left\{ - \|\y\|_2 \h^T\x -\|\x \|_2 \g^T\y - \f({\x,\y}) \right\} = - \max_{\y\in\Sc_\y}\min_{\x\in\Sc_\x}\left\{ \|\y\|_2 \h^T\x + \|\x \|_2 \g^T\y + \f({\x,\y})\right\},
$$
and further apply the minimax inequality \cite[Lemma 36.1]{Roc70} which requires that for all $\g,\h$
$$
\max_{\y\in\Sc_\y}\min_{\x\in\Sc_\x}\left\{ \|\x \|_2 \g^T\y  + \|\y\|_2 \h^T\x + \f({\x,\y}) \right\}\leq 
\min_{\x\in\Sc_\x}\max_{\y\in\Sc_\y}\left\{ \|\x \|_2 \g^T\y  + \|\y\|_2 \h^T\x + \f({\x,\y}) \right\} := \Lc(\g,\h).
$$
These, when combined with \eqref{eq:les?}, give
$\Pro\left( -\Gc(\G)< c \right) \leq 2\Pro\left( -\Lc(\g,\h)\leq c \right).$
Taking $c=-c_+$, proves that for all $c_+\in\R$:
\begin{align}\label{eq:dual22}
\Pro\left(  \Gc(\G) > c_+ \right) \leq 2\Pro\left(  \Lc(\g,\h) \geq c_+ \right). 
\end{align}
Fix any $\mu\in\R$ and $t>0$ and note that $\Pro(|x-\mu|>t) = \Pro(x<\mu-t)+ \Pro(x>\mu+t)$ for any r.v. $x\in\R$. Use this accordingly and apply \eqref{eq:primal} and \eqref{eq:dual22} for $c=\mu-t$ and $c_+=\mu+t$ to conclude with the desired.

\subsection{Discussion}\label{sec:discussion}


To enlighten notation, when clear from context, we  drop the superscript $(n)$ from the variables involved in Theorem \ref{thm:main}. Importantly, the first two are non-asymptotic in nature, i.e. they  hold true for all values of the problem dimensions $m$, $n$.

\vp
\subsubsection{Statement (i)}
Inequality \eqref{eq:primal} is essentially no different than what Theorem \ref{lem:Gor} states; if $c_-$ is a high probability lower bound for Gordon's optimization $\Lc(\g,\h)$, so it is for $\Gc(\G)$.
As already mentioned, in contrast to Theorem \ref{lem:Gor}, the minimax optimization in \eqref{eq:1a} does not include the term ``$g\|\x\|_2\|\y\|_2$". The ``price" paid for this, is the multiplicative factor of $2$ in \eqref{eq:primal}, when compared to \eqref{eq:Gor}. Note however that this factor does not affect the essence of the result since the scenarios of interest are those for which $\Pro(\Lc(\g,\h)\leq c_-)$ is close to zero. What is more, in most of the applications
where \GMT~has been proven to be useful, the optimization problem involved is in the form of \eqref{eq:1a} rather than that of \eqref{eq:3}. One reason behind this, is that under convexity assumptions on $\Sc_\x$, $\Sc_\y$ and $\psi(\cdot,\cdot)$ the minimax optimization in \eqref{eq:1a} is a \emph{convex} program, which is generally more likely to be encountered in applications compared to the always non-convex program in \eqref{eq:3}\footnote{ the component $g\|\x\|_2\|\y\|_2$ causes the min-max optimization in \eqref{eq:3} to be non-convex even when $\Sc_\x$,$\Sc_\y$ are convex and $\psi(\cdot,\cdot)$ convex-concave.}. Convexity, is also critical for establishing the second statement of the theorem, namely inequality \eqref{eq:dual}.

\vp
\subsubsection{Statement (ii)}
 The main contribution of Theorem \ref{thm:main} is   inequality \eqref{eq:dual}. This holds only after imposing appropriate convexity assumptions and provides a counterpart to \eqref{eq:primal} and \GMT.
%
Of course, \eqref{eq:dual} becomes interesting when $\mu$ is chosen so that $\ph$ concentrates around it. In this case, the probability in the right-hand side of \eqref{eq:dual} is vanishing, indicating that $\Ph$ concentrates around the same value.
In particular, we can apply \eqref{eq:dual} for $\mu = \E\Lc(\g,\h)$. It is shown in Lemma \ref{lem:lip} in the Appendix that $\Lc(\g,\h)$ is Lipschitz in $(\g,\h)$. It then follows from  the Gaussian concentration property of Lipschitz functions (see Proposition \ref{prop:lip}) that $\Lc(\g,\h)$ is normally concentrated around its mean $\Exp\Lc(\g,\h)$. Thus, we obtain Corollary \ref{cor:main} below.

\begin{cor}\label{cor:main} Consider the same setup as in Theorem \ref{thm:main} and let the assumptions of statement (ii) therein hold. Further, define $R_\x := \max_{\x\in\Sc_\x}\|\x\|_2$ and $R_\y := \max_{\y\in\Sc_\y}\|\y\|_2$.
 Then, for all $t>0$, 
\begin{align*}
\Pro\left(~|\Gc(\G) - \E\Lc(\g,\h) | > t ~\right)\leq 4\exp\left( -t^2/({4R_\x^2R_\y^2}) \right).
\end{align*} 
\end{cor}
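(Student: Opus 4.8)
The plan is to combine statement (ii) of Theorem \ref{thm:main}, specialized to $\mu = \E\Lc(\g,\h)$, with a Gaussian concentration estimate for $\Lc(\g,\h)$ around its mean; the factor $4$ in the claimed bound will then arise as the product of the factor $2$ appearing in \eqref{eq:dual} and the factor $2$ in the two-sided Gaussian concentration inequality. Concretely, I would first invoke \eqref{eq:dual} with the choice $\mu = \E\Lc(\g,\h)$, which is legitimate since the hypotheses of statement (ii) are in force; this reduces the problem to controlling $\Pro(|\Lc(\g,\h) - \E\Lc(\g,\h)| \ge t)$.

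The central step is to identify the Lipschitz constant of $\Lc(\g,\h)$ viewed as a function of the concatenated standard Gaussian vector $(\g,\h)\in\R^{m+n}$, which is exactly the content of Lemma \ref{lem:lip}. For each fixed $(\x,\y)\in\Sc_\x\times\Sc_\y$, the inner objective $\|\x\|_2\g^T\y + \|\y\|_2\h^T\x + \psi(\x,\y)$ is affine in $(\g,\h)$ with gradient $(\|\x\|_2\y,\|\y\|_2\x)$, whose Euclidean norm equals $\sqrt{2}\,\|\x\|_2\|\y\|_2 \le \sqrt{2}\,R_\x R_\y$; here finiteness of $R_\x$ and $R_\y$ is guaranteed by compactness of $\Sc_\x$ and $\Sc_\y$. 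Since a pointwise minimum over $\x$ followed by a pointwise maximum over $\y$ of a family of $L$-Lipschitz functions is again $L$-Lipschitz, it follows that $\Lc(\g,\h)$ is $\sqrt{2}\,R_\x R_\y$-Lipschitz in $(\g,\h)$.

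Finally, I would apply the Gaussian concentration property of Proposition \ref{prop:lip} with $L = \sqrt{2}\,R_\x R_\y$, yielding $\Pro(|\Lc(\g,\h) - \E\Lc(\g,\h)| \ge t) \le 2\exp(-t^2/(2L^2)) = 2\exp(-t^2/(4R_\x^2R_\y^2))$, and substitute this into the right-hand side of \eqref{eq:dual} to obtain the stated $4\exp(-t^2/(4R_\x^2R_\y^2))$. I do not anticipate any genuine obstacle: the only point requiring care is bookkeeping the constants so that the two factors of $2$ and the variance proxy $2L^2 = 4R_\x^2R_\y^2$ align with the claimed bound, while the Lipschitz estimate itself is routine once the affine dependence on $(\g,\h)$ is noted.
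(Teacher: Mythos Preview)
Your proposal is correct and matches the paper's argument essentially step for step: apply \eqref{eq:dual} with $\mu=\E\Lc(\g,\h)$, invoke the $\sqrt{2}R_\x R_\y$-Lipschitz bound of Lemma \ref{lem:lip}, and then use Proposition \ref{prop:lip} to get the two-sided tail $2\exp(-t^2/(4R_\x^2R_\y^2))$, which after the factor $2$ from \eqref{eq:dual} gives the stated $4\exp(-t^2/(4R_\x^2R_\y^2))$. The only cosmetic slip is the order of operations in your Lipschitz argument (it is $\max_\y$ inside, then $\min_\x$ outside), but since both pointwise sup and pointwise inf preserve a common Lipschitz constant, the conclusion is unaffected.
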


Two remarks concerning the required convexity conditions  of the statement are in place. 

\underline{Remark 1}:
As seen in Section \ref{sec:proof}, the critical step involved in the proof of \eqref{eq:dual} is being able to flip the order of the min-max operation in the (PO) problem without changing its optimal cost. The convexity conditions as specified in the second statement of the theorem guarantee that this is possible. Note however that these conditions are only sufficient. In principle, it might be possible to flip the order of min-max under milder conditions in which case \eqref{eq:dual} would continue to hold.

\underline{Remark 2}: 
Note that flipping the min-max order, as in \cite[Corollary 37.3.2]{Roc70}, requires \emph{at least one} of the two sets $\Sc_\x$, $\Sc_\y$ to be compact. However, Theorem \ref{thm:main} asks for compactness of \emph{both} sets, the reason being to guarantee that \eqref{eq:primal} is applicable.
Overall, the compactness condition on the sets $\Sc_\x$ and $\Sc_\y$ is important. If we wish to apply Theorem \ref{thm:main} to problem instances, like \eqref{eq:intro_general_2}, in which the constraint sets appear unbounded, then we first need to show (if possible) that the set of min-max optima of \eqref{eq:1} is indeed compact. This would allow us to define proper sets $\Sc_\x$ and $\Sc_\y$. Please refer to Section \ref{sec:app_gen} and Appendix \ref{sec:tech}, for a concrete illustration of those ideas.




\vp
\subsubsection{Statement (iii)}
The first two statements of the theorem precisely characterize the relation between the optimal \emph{costs} $\Phn$ and $\phn$ of the two min-max optimizations in \eqref{eq:1a} and \eqref{eq:2a}: under appropriate convexity assumptions the tail distribution of $\Phn$ is bounded by two times the tail distribution of $\phn$. Hence, if $\phn$ has a ``well-behaved" limiting behavior, in the sense of converging to a deterministic limit $d_*$ as $n\rightarrow \infty$, then, $\Phn$ will also converge to the same limit.
Statement (iii) provides conditions under which a similar strong relation can be established between the optimal \emph{values} of the two seemingly unrelated optimization problems.

Let us briefly discuss the conditions of the statement. First, the same convexity assumptions as in statement (ii) should be present. Next, it is naturally required that as $n\rightarrow\infty$ both $\phn$ and $\|\xpn\|$ are ``well-behaved", i.e. they both converge to deterministic values say $d_*$ and $\alpha_*$. Here, it is important to remark that $\xpn$ denotes \emph{any} optimal minimizer in \eqref{eq:2a}. We do \emph{not} require that $\xpn$ is unique; there might be multiple such optima, but they all have norms that converge to $\alpha_*$. The last condition, guarantees that any other feasible $\x$ with norm that is far from the optimal $\alpha_*$ results in a  strictly positive increase (uniform over $n$) of the objective value. One sufficient (but not necessary) condition that is met in practice and satisfies the aforementioned is that the function $\xfun(\cdot;\g,\h)$ be \emph{strictly convex} with respect to the norm $\|\cdot\|$ in consideration. 

Satisfying the conditions of the third statement of the theorem requires thorough analysis of Gordon's optimization in \eqref{eq:2a}. In specific applications, $\Sc^\mn_\x,\Sc^\mn_\y$ and $\psi^\mn$ take explicit forms and may allow the simplification of the min-max optimization involved. The procedure is specific to different applications, but a somewhat general recipe that underlies the analysis in most cases can be derived. This is the subject of Appendix \ref{sec:recipe}.




\section{Application}\label{sec:app}

\subsection{Estimating structured signals from noisy linear observations}\label{sec:app_intro}

Consider
the task of estimating an unknown but \emph{structured} signal $\x_0\in\R^n$ from noisy linear observations $\y=\A\x_0+\z\in\R^m$, where $\A$ is the measurement matrix and $\z$ is the 
noise vector. $\x_0$ is structured in the sense that it actually lives in a manifold of lower dimension than the dimension $n$ of the ambient space. Typical examples of such signals include sparse and block-sparse signals, low-rank matrices, signals that are sparse over a dictionary (i.e. lies on a union of subspaces) and many more (see \cite{Cha}). 

In order to measure the fit of any vector $\x\in\R^n$ to the vector of observations $\y\in\R^m$ we introduce a loss function $\loss:\R^m\rightarrow\R$, which assigns a penalty $\loss(\y-\A\x)\geq0$ to the corresponding residual $\y-\A\x$. At the same time, 
in order to promote the particular structure of $\x_0$, we associate it with an appropriate structure-inducing function $\rr:\R^n\rightarrow\R$. For example, if $\x_0$ is a \emph{sparse} vector then $\rr$ can be the $\ell_1$-norm, while if $\x_0$ is a $\sqrt{n}\times\sqrt{n}$ \emph{low-rank} matrix then a popular choice for $\rr$ is the nuclear norm (see \cite{Cha} for more examples). 
With these we can obtain an estimate for $\x_0$ as the solution $\hat\x$ to the  following optimization procedure\footnote{the minimizer of \eqref{eq:genLASSO} need not be unique. Using a slight abuse of notation, let the operator $\arg\min$ return any one of those optimal values.}:
\begin{align}\label{eq:genLASSO}
\hat\x := \arg\min_{\x} \loss(\y-\A\x)+\la\rr(\x),
\end{align}
where $\la\geq 0$ is a regularizer parameter. If the functions $\loss(\cdot)$ and $f(\cdot)$ are both convex, then the optimization program in \eqref{eq:genLASSO} is convex, so it can be solved efficiently \cite{boyd2009convex}.
 Specific choices of the loss function $\loss$ and the regularizer $f$ give rise to different instances of the procedure in \eqref{eq:genLASSO}. In what follows, we enlist some of the most encountered such instances:
 \begin{itemize}
 \vp
 \item {Ordinary least-squares} ($\loss(\cdot) = (1/2)\|\cdot\|_2^2$, $f(\cdot) = 0$)
 \vp
 \item {LASSO} ($\loss(\cdot) = (1/2)\|\cdot\|_2^2$, $f(\cdot) = \|\cdot\|_1$). This particular choice amounts to arguably the most celebrated instance of \eqref{eq:genLASSO}. It is widely known in the statics literature as the LASSO and was proposed by Tibshirani in 1996 as a sparse recovery algorithm \cite{TibLASSO}. The  ``least-squares" nature of the loss function corresponds to a maximum likelihood estimator for the case when $\z$ is gaussian. 
 \vp
 \item {$\ell_2$-LASSO} ($\loss(\cdot) = \|\cdot\|_2$).  This algorithm is very similar in nature to the LASSO and is known in the literature as square-root LASSO \cite{Belloni} or $\ell_2$-LASSO \cite{OTH}. Indeed, Lagrange duality theory shows that there exist choice of the regularizer parameters of the two algorithms so that they become equivalent. However, there exists differences among them. As an example, it can be shown \cite{Belloni, OTH} that tuning of the regularizer parameter of the $\ell_2$-LASSO does not require knowledge of the standard deviation of the noise.
 \vp
 \item {Generalized-LASSO} ($\loss(\cdot) = \|\cdot\|_2^2$ or $\loss(\cdot) = \|\cdot\|_2$). This is a natural generalization of the LASSO and the $\ell_2$-LASSO that allows regularizers $f$ other than the $\ell_1$ norm. These include nonsmooth functions such as nuclear norm, $\ell_{1,2}$ norm and discrete total variation. We often refer to this as the Generalized LASSO.
 \vp
 \item {LAD} ($\loss(\cdot) = \|\cdot\|_1$). This uses a  ``least-absolute deviations" criterion as the loss function and is known as the regularized LAD estimator (\cite{wang2013,Allerton}) or $\ell_1$-minimization \cite{wright2010dense}. The least absolute deviation (LAD) type of algorithms are known to have robust properties in linear regression models (e.g. \cite{rao1995linear}) and are important when heavy-tailed
errors present \cite{wang2013}.  Also, it has been shown to perform particularly well in the presence of \emph{sparse} noise \cite{wright2010dense,foygel,Allerton}.
 \end{itemize}
 Of course, the above list of examples is not exhaustive. Different applications might require different choice of the loss function. For instance, in a scenario where noise is known to be bounded it might be preferable to choose the $\ell_\infty$-norm as the loss function.

\subsection{Tight Performance Analysis}\label{sec:app_tight}

In the setup introduced in Section \ref{sec:app_intro}, the fundamental question of interest is characterizing the 
estimation performance of \eqref{eq:genLASSO}. One possible, and widely used, measure of performance is the \emph{normalized squared error}\footnote{similarly defined measures of performance are considered in the literature under the term of \emph{noise sensitivity}, e.g. \cite{wu2012optimal,donoho2011noise} } $\|\hat\x-\x_0\|_2^2/\|\z\|_2^2$, which quantifies robustness of the estimator. Understanding the behavior of this quantity in terms of the choice of the measurement matrix $\A$, the number of measurements $m$, the convex regularizer $\rr$, the value of the regularizer parameter $\la$ and the unknown signal $\x_0$ itself, is both of theoretical and practical interest. As an example, knowledge of the dependence on $\la$ can provide valuable insights for the challenging task of optimally tuning \eqref{eq:genLASSO}. 
In principle, one is interested in the characterization of the normalized squared error for all possible values of the number of observations $m$. However, of particular interest to us is the \emph{underdetermined} regime, in which the number of measurements $m$ is less than the dimension $n$ of the ambient space of $\x_0$. 

Estimation procedures that can be cast in the generic form of \eqref{eq:genLASSO} have been used in practice since at least twenty years. Inevitably, their theoretical analysis has attracted enormous attention during this period of time. The advances in the study of \emph{noiseless} underdetermined problems, that started with the seminal works  \cite{candes2006robust,donoho2006compressed} under the name of ``compressive sampling", has resulted in a significant progress on our understanding of the performance of \eqref{eq:genLASSO} in the presence of noise (e.g. \cite{candes2007dantzig,bickel,Belloni,negahban2012unified,Raskutti}). Although remarkable, those results characterize the normalized squared error only up to unknown numerical multiplicative constants (order-wise analysis). Only recently, Donoho, Maleki, and Montanari derived 
the first precise formulae predicting the limiting behavior of the $\ell_2^2$-LASSO reconstruction error \cite{donoho2011noise}; a proof of the formulae appeared later by Bayati and Montanari \cite{montanariLasso}. The authors of these references consider the $\ell_2^2$-LASSO with $\ell_1$-regularization, i.i.d Gaussian sensing matrix $\A$ and use the Approximate Message Passing (AMP) framework for the analysis (also see subsequent related works \cite{arianComplex,arianDenoising}).
  Stojnic \cite{StoLASSO} derived precise such results for a constrained version of the LASSO, but most significantly, was the first to introduce the idea of analyzing the normalized error using GMT, upon which Theorem \ref{thm:main} builds.   Theorem \ref{thm:main} can be used as a framework to extend the analysis of \cite{StoLASSO} to a unifying treatment of \eqref{eq:genLASSO}, which results in \emph{asymptotically tight} expressions for the normalized square error. 
Essentially, via Theorem \ref{thm:main} the analysis of \eqref{eq:genLASSO} can be carried out if  one analyzes a simpler optimization program, instead. We illustrate the basic idea and derive the generic format of this simpler optimization in Section \ref{sec:app_gen}. Specializing this to specific choices of the loss function and the regularizer results in concrete characterizations of the normalized error for different instances of \eqref{eq:genLASSO} (LASSO, generalized LASSO, LAD, etc.). The details of this latter part of the analysis is out of the scope of this work. We refer the interested reader to our accompanying papers \cite{OTH,TOH14,LS,ICASSP,Allerton,ISIT15_1,ISIT15_2}. We remark that even though the results hold in the asymptotic of the problem dimensions $m$ and $n$ growing to infinity, numerical simulations show that the expressions are already reasonably tight for values of $m$ and $n$ on a few hundreds. 

For our analysis,  we assume onwards that the measurement matrix $\A$ has entries i.i.d. standard normal. Treating the measurement matrix as generated from a random ensemble is a common practice in the literature of compressive sensing (please refer to the tutorials\cite{vershynin2014estimation,candes2014mathematics}). Specifically the Gaussian ensemble is commonly used as a generic assumption since: i) it is very well understood and enjoys remarkable properties which greatly facilitate the analysis. In our case, such a property is the Gaussian min-max Theorem \ref{lem:Gor}, ii) many of the results derived for the Gaussian ensemble are known to enjoy a \emph{universality} property, i.e. to hold true for fairly broad family of probability ensembles. For instance, empirical simulations performed in \cite{montanariLasso,OTH} suggest that the normalized square error of \eqref{eq:genLASSO} admits the same asymptotic expression even when $\A$ has entries i.i.d. sub-gaussian of zero mean and unit variance. 

Even though not discussed here in any further detail, it is shown in \cite{ISIT15_1} how Theorem \ref{thm:main} can be used to analyze the performance of \eqref{eq:genLASSO} when $\A$ is an isotropically random orthogonal matrix, i.e. is uniformly distributed on the Stiefel manifold, the set of all orthogonal matrices $\mathbf{Q}\in\R^{m\times n}$ such that  $\mathbf{Q}\mathbf{Q}^T = \mathbf{I}_m$.

\subsection{Using the framework of Theorem \ref{thm:main}}\label{sec:app_gen}


\subsubsection{Conjugate pairs}

For the analysis, we  require the notion of the Fenchel conjugate; we briefly recall here its definition and some of its relevant properties. The conjugate of $\phi:\R^k\rightarrow\R$
is the function $\phi^*:\R^k\rightarrow(-\infty,+\infty]$\footnote{ Following the common practice (e.g. as in \cite[Ch.~12]{Roc70} and \cite[Ch.~7]{Bertsekas}) we define $\phi^*(\cdot)$ as an extended real-valued function that takes the value $+\infty$ whenever $\ub\notin\operatorname{dom}\phi$.} defined as $\phi^*(\ub) := \sup_{\vb} \vb^T\ub - \phi(\ub)$. The conjugate function is  convex (as the pointwise supremum of affine functions) and lower semi-continuous. Also, if $\phi(\cdot)$ is convex and continuous (lower semi-continuity suffices) $\phi(\vb) = \sup_{\ub\in\R^k} \{\ub^T\vb - \phi^*(\ub)\}$ for all $\vb\in\dom f$ \cite[Thm. 12.2]{Roc70}. Some standard examples of conjugate pairs of continuous convex functions, also relevant to our analysis, are the following:

$\phi(\vb) = (1/2)\|\vb\|^2$ $\leftrightarrow$ $\phi^*(\ub) = (1/2)\|\ub\|^2$,
~~~~ and, ~~~~ $\phi(\vb) = \|\vb\|$ $\leftrightarrow$ $\phi^*(\ub) = \begin{cases} 0 & \|\ub\|_*\leq 1, \\ +\infty &\text{else}.\end{cases}$

Here, $\|\ub\|_*=\sup_{\|\vb\|\leq 1}\vb^T\ub$ denotes the dual-norm of $\|\cdot\|$. For instance, $\|\cdot\|_\infty$ is the dual-norm of $\|\cdot\|_1$, while $\|\cdot\|_2$ is self-dual.


\subsubsection{Assumptions}

Next, let us formalize our assumptions on the nature of  the loss function $\loss(\cdot):\R^m\rightarrow[0,\infty)$ and the regularizer function $f(\cdot):\R^n\rightarrow[0,\infty)$.
Both are assumed \emph{convex}
, continuous on $\R^m$ and $\R^n$ respectively,
and, coercive\footnote{Adapting the terminology from \cite{Bertsekas}, a function $\phi(\cdot):\R^n\rightarrow\R$ is coercive if for every sequence $\{\x_k\}\subset\R^n$ such that $\|\x_k\|\rightarrow\infty$, we have $\lim_{k\rightarrow\infty}\phi(\x_k)=\infty$}.
  The latter is only a mild technical assumption that conforms with the purpose of the two functions; in particular, all five instances of \eqref{eq:genLASSO} listed in Section \ref{sec:app_intro} satisfy this. In addition, we assume that the conjugate of the loss function $\loss^*(\cdot)$ is continuous in its effective domain $\operatorname{dom}\loss^*:=\{\ub | \loss^*(\ub)<\infty\}$. 
  We have not made any particular effort to relax this technical assumption, partly because it appears to be mild for our interests. In particular, it is satisfied for all  instances of \eqref{eq:genLASSO} considered in  Section \ref{sec:app_intro}.

%

 
\subsubsection{Generic analysis of \eqref{eq:genLASSO}} In order to apply Theorem \ref{thm:main} we need to bring the optimization problem in \eqref{eq:genLASSO} in the format of the (PO) problem in \eqref{eq:1}. But first,  it is convenient to rewrite it after changing the decision variable to the quantity of interest $\w:=\x-\x_0$:
\begin{align}\label{eq:gL1}
\hat\w = \arg\min_{\w} \loss(\A\w-\z)+\la\rr(\x_0+\w),
\end{align}
where we also used $\y=\A\x_0 + \z$. Of course, one cannot actually solve \eqref{eq:gL1} since it involves $\z$ and $\x_0$, which are both unknown; thus, 
it  only serves the purposes of the analysis.
To bring it in the desired format, we express $\loss(\cdot)$ in terms of its conjugate function $\loss^*:\R^m\rightarrow(-\infty,+\infty]$, to obtain:
\begin{align}\label{eq:gL2}
\hat\w = \arg\min_{\w}\sup_{\ub} ~\ub^T\A\w-\ub^T\z - \loss^*(\ub)+ \la\rr(\x_0+\w).
\end{align}
Identify $\f(\w,\ub) := -\ub^T\z-\loss^*(\ub) + \la\rr(\x_0+\w)$ and recall our assumption that $\A$ has entries i.i.d. $\Nn(0,1)$  to see that \eqref{eq:gL2}  is in the desired format of \eqref{eq:1}. This brings us just one step before being able to apply Theorem \ref{thm:main}. The remaining technicality is that  Theorem \ref{thm:main} requires (also see Definition \ref{def:ad}) both the constraint sets of the min-max optimization to be bounded and the function $\f(\w,\ub)$ to be continuous on them. 
Using the (mild) technical assumptions on $\loss(\cdot)$ and $f(\cdot)$, we argue in Appendix \ref{sec:tech} that the set of minima in \eqref{eq:gL1} is compact and for any $\w$ in a compact set, the supremum over $\ub$ in \eqref{eq:gL2} is attained. Thus,
 we can properly choose (sufficiently large)  $K_\w>0$ and $K_\ub>0$ such that 
\begin{align}\label{eq:gL3}
\hat\w = \arg\min_{\|\w\|_2\leq K_\w}\max_{\|\ub\|_2\leq K_\ub} ~\ub^T\A\w-\ub^T\z - \loss^*(\ub)+ \la\rr(\x_0+\w),
\end{align}
and $\loss^*(\ub)$ is continuous for all $\|\ub\|_2\leq K_\ub$.
With these, we may express\footnote{Strictly speaking, to be in accordance with the setup of Theorem \ref{thm:main}, we need to consider a sequence  $\{\A^\mn, \z^\mn, \x_0^\mn, f^\mn(\cdot) \}$, such that $\A^\mn\in\R^{m\times n}$ with entries i.i.d. $\Nn(0,1)$, $\z^\mn\in\R^m$ with entries i.i.d. $\Nn(0,\sigma^2)$, $\x_0^\mn\in\R^n$ and $f^\mn:\R^n\rightarrow\R$ a convex function. With these we can properly define an admissible sequence as in Theorem \ref{thm:main}. We avoid explicitly  introducing this notation in the main text in order to keep the presentation simple.
} the corresponding (AO) problem as:
\begin{align}
\phi(\g,\h) &= \min_{\|\w\|_2\leq K_\w}\max_{\|\ub\|_2\leq K_\ub} \|\w\|_2\g^T\ub + \|\ub\|_2\h^T\w-\z^T\ub-\loss^*(\ub)+\la\rr(\x_0+\w)\nn \\
&=\min_{\|\w\|_2\leq K_\w}\max_{\|\ub\|_2\leq K_\ub} (\|\w\|_2\g-\z)^T\ub - \loss^*(\ub) + \|\ub\|_2\h^T\w+\la\rr(\x_0+\w).\label{eq:gLG}
\end{align}
A few comments are in place. First, observe that the assumptions of statement (ii) of Theorem \ref{thm:main} are satisfied since the constraint sets are both convex and compact, and, the function $\f(\w,\ub) = -\ub^T\z-\loss^*(\ub)+\la\rr(\x_0+\w)$ is convex in $\w$ and concave 
(recall that $\loss^*(\cdot)$ is convex)
 in $\ub$. Thus, $\Phi(\A)$, and, equivalently the optimal cost of \eqref{eq:genLASSO} concentrate around the same quantity to which $\phi(\g,\h)$ does. The next step involves the analysis of the (asymptotic) behavior of $\phi(\g,\h)$. Also, in order to identify $\|\hat\w\|_2$\footnote{We focus on the analysis of the $\ell_2$-norm of the error vector $\w=\x-\x_0$, but Corollary \ref{cor:II} can in principle be applied to analyze other norms, too.}, we still need to verify the conditions of statement (iii) of Theorem \ref{thm:main}, or Corollary \ref{cor:II}. We comment on these in the rest of the section. 
 

\subsubsection{Discussion}

Thus far, we have only introduced a model for the measurement matrix $\A$ and have been silent regarding the noise vector $\z$ and the unknown signal $\x_0$. Having assumed that  $\A$ is random gaussian allowed us to derive a corresponding (AO) problem for \eqref{eq:genLASSO}. For the next step, i.e. the analysis of \eqref{eq:gLG}, we need to model $\z$ and $\x_0$, as well. These depend on the specific instance of \eqref{eq:genLASSO}. For instance, for the LASSO it is typically assumed that $\z$ is gaussian, while a sparse noise model is more reasonable for the  LAD.
  Also, an $\ell_1$-regularizer is typically associated with a sparse $\x_0$, while nuclear-norm regularization corresponds to a low-rank $\x_0$. Thus, the analysis of \eqref{eq:gLG} is problem specific. Note, however, that the probabilistic relation established by Theorem \ref{thm:main} between \eqref{eq:gL3} and \eqref{eq:gLG} holds for all $\z$ and all $\x_0$. Thus, provided that $\A$ is statistically independent from them, Theorem \ref{thm:main} continues to hold even after interpreting the probabilities to be over the joint distribution of $\A$, $\z$ and $\x_0$.

The purpose of this section has been to set up a generic framework and introduce the machinery for the analysis of algorithms that can be cast in the format of \eqref{eq:genLASSO}. 
Of course, a final answer to the problem of interest (here, characterizing the squared-error) is only obtained after the analysis of the (AO) problem as prescribed by the third statement of Theorem \ref{thm:main}. This part, is typically involved on its own and is out of the scope of the paper. Nevertheless, in the next few lines we describe the specific format that  the (AO) problem takes for the five instances of \eqref{eq:genLASSO} that were described in Section \ref{sec:app_intro}, and provide references to works that deal with their specific analysis.

\subsubsection{Examples}
\begin{itemize}
\vp
 \item {LASSO:} Here, $\loss^*(\ub) = (1/2)\|\ub\|^2$, $f(\cdot) = \|\cdot\|_1$, $\z$ has entries i.i.d. $\Nn(0,\sigma^2)$. Note that $\|\w\|_2\g-\z$ in \eqref{eq:gLG} is statistically identical to a random vector with entries i.i.d $\Nn(0,\|\w\|_2^2+\sigma^2)$ and with some abuse of notation \eqref{eq:gLG} takes the form:
\begin{align}
\phi_{\text{LASSO}}(\g,\h) &=\min_{\|\w\|_2\leq K_\w}\max_{\|\ub\|_2\leq K_\ub} \sqrt{\|\w\|_2^2+\sigma^2}~\g^T\ub - \frac{1}{2}\|\ub\|_2^2 + \|\ub\|_2\h^T\w+\la\|\x_0+\w\|_1\nn
\end{align}
where $\g\in\R^m$, $\h\in\R^n$ have entries i.i.d. $\Nn(0,1)$. 
\vp
 \item {$\ell_2$-LASSO:}  We have $\dom\loss^* = \{\ub~|~\|\ub\|_2\leq 1\}$ and $\loss^*(\ub)= 0$ for all $\ub\in\dom\loss^*$. Also, the entries of $\z$ are i.i.d. $\Nn(0,\sigma^2)$. Hence, \eqref{eq:gLG} corresponding to the $\ell_2$-LASSO with $\ell_1$-regularization becomes:
 \begin{align}
\phi_{\ell_2\text{-LASSO}}(\g,\h) &=\min_{\|\w\|_2\leq K_\w}\max_{\|\ub\|_2\leq 1} \sqrt{\|\w\|_2^2+\sigma^2}~\g^T\ub \|\ub\|_2\h^T\w+\la\|\x_0+\w\|_1\label{eq:fin}.
\end{align}
Please refer to \cite{ICASSP} for a detailed treatment of \eqref{eq:fin}, which results in a tight expression for the normalized squared error (NSE) of the $\ell_2$-LASSO with $\ell_1$-regularization. Therein, it is assumed that $\x_0$ is $k$-sparse with non-zero entries i.i.d. $\Nn(0,\sigma_x^2)$. The result is derived for the \emph{linear} regime in which $m/n\rightarrow \delta\in(0,\infty)$ and $k/n\rightarrow \rho\in(0,1)$ as $n\rightarrow\infty$; this regime is assumed throughout all references discussed in this section. The derived formulae explicitly characterizes the NSE for \emph{all} values of $\sigma^2$ as  a function of $m$, $n$, $k$, $\sigma_x^2$ and $\la$. 

\vp
 \item {Generalized-LASSO:} Here, $f(\cdot)$ is arbitrary. As above, $\z$ has entries i.i.d. $\Nn(0,\sigma^2)$ and the loss function can be either $(1/2)\|\cdot\|^2$ or $\|\cdot\|_2$. We perform a \emph{high-SNR} analysis in the limit of $\sigma^2\rightarrow 0$. In this regime, we can approximate $f(\cdot)$ on the first-order (\cite[Thm.~23.4]{Roc70}) as $f(\x_0+\w)\approx f(\x_0) + \max_{\s\in\paf} \s^T\w$ (please see \cite[Sec.~9.1]{OTH} for details on the validity of the approximation). With these, the (AO) problem for the generalized LASSO and generalized $\ell_2$-LASSO writes:
\begin{align}
\phi_{\text{gen-LASSO}}(\g,\h) &=\min_{\|\w\|_2\leq K_\w}\max_{\substack{\|\ub\|_2\leq K_\ub \\ \s\in\paf }} \sqrt{\|\w\|_2^2+\sigma^2}~\g^T\ub - \frac{1}{2}\|\ub\|_2^2 + (\|\ub\|_2\h + \la\s)^T\w\label{eq:gen_LASSO},
\end{align}
and 
\begin{align}
\phi_{\text{gen-$\ell_2$-LASSO}}(\g,\h) &=\min_{\|\w\|_2\leq K_\w}\max_{\substack{\|\ub\|_2\leq 1 \\ \s\in\paf }} \sqrt{\|\w\|_2^2+\sigma^2}~\g^T\ub + (\|\ub\|_2\h + \la\s)^T\w\label{eq:gen_ell2_LASSO},
\end{align}
respectively.
In the high-SNR regime, the analysis only depends on $f(\cdot)$ and $\x_0$ through a ``first-order surrogate", namely the subdifferential $\paf$. For example, in sparse recovery with $\ell_1$-regularization, the high-SNR normalized square error depends only on the sparsity of the unknown signal $\x_0$. Similarly, in the case of recovery of a low-rank matrix $\x_0$ using nuclear-norm regularization, the high-SNR NSE depends only on the rank of $\x_0$. On the other hand, the NSE of the LASSO in the finite-SNR regime depends on the specific statistics of $\x_0$. For example, the analysis of \eqref{eq:fin} in \cite{ICASSP} is performed under the assumption that $\x_0$ is $k$-sparse with non-zero entries being i.i.d. $\Nn(0,\sigma_x^2)$.

The analysis of \eqref{eq:gen_ell2_LASSO} is performed in \cite{OTH} and that of \eqref{eq:gen_LASSO} in \cite{ISIT15_2}. These yield sharp formulae for the high-SNR NSE of the generalized LASSO and $\ell_2$-LASSO. In fact, extensive numerical simulations and partial theoretical results (\cite[Sec.~10]{OTH},\cite{ICASSP}) that the high-SNR NSE corresponds to the worst-case NSE, i.e. $\lim_{\sigma^2\rightarrow 0}\NSE(\sigma) = \sup_{\sigma^2>0} \NSE(\sigma)$. In this sense, the results of \cite{OTH,ISIT15_2} correspond to tight upper bounds on the NSE for all values of $\sigma^2$ for arbitrary regularizer functions. As a side result, the derived sharp formulae provide guidelines for the optimal tuning of the regularizer parameter $\la$. Also, they yield explicit characterization of the mapping between the regularizer parameters of the LASSO and the $\ell_2$-LASSO. Finally, please refer to \cite{LS} for an interpretation of those results as a natural extension to the classical analysis of the ordinary least squares.

\vp
 \item {LAD}: We have $\dom\loss^* = \{\ub~|~\|\ub\|_\infty\leq 1\}$ and $\loss^*(\ub)= 0$ for all $\ub\in\dom\loss^*$. Here, it is natural to assume that  the noise vector $\z$ is sparse. For instance, assuming that $\z$ is $s$-sparse  (w.l.o.g., $\z_{s+1}=\ldots=\z_m = 0$) with its non-zero entries i.i.d. $\Nn(0,\sigma^2)$, \eqref{eq:gLG} (after the usual first-order approximation of $f(\cdot)$) becomes:
\begin{align}
\phi_{\text{LAD}}(\g,\h) &=\min_{\|\w\|_2\leq K_\w}\max_{\substack{\|\ub\|_\infty\leq 1 \\ \s\in\paf }} \sqrt{\|\w\|_2^2+\sigma^2}\sum_{i=1}^{s}~\g_i\ub_i + \|\w\|_2\sum_{i=s+1}^m\g_i\ub_i + (\|\ub\|_2\h + \la\s)^T\w\label{eq:LAD}
\end{align} 
The analysis of \eqref{eq:LAD} is performed in \cite{Allerton}. With this and an application of Theorem \ref{thm:main}, we have derived tight asymptotic expressions for the high-SNR NSE of the regularized-LAD estimator. Among others, this allows for an exact performance comparison between the LASSO and the LAD (cf. \cite[Sec.II-F]{Allerton}).
 \end{itemize}

It is worth commenting on the sharpness of expressions for the NSE of \eqref{eq:genLASSO}  that are derived through application of Theorem \ref{thm:main}, for finite problem dimensions. Statement (iii) of Theorem \ref{thm:main} requires the problem dimensions to grow to infinity. However, numerical simulations for a wide class of examples (\cite{Sto,OTH,Allerton,ICASSP,ISIT15_2}) suggest that the theoretical predictions become fairly tight for relatively small problem dimensions (ranging over a few hundreds).


%


%
%

\subsection{An example: Working out the details}\label{sec:details}

Assume the same setup as in Section \ref{sec:app_intro} and consider the following instance of \eqref{eq:genLASSO}:
\begin{align}\label{eq:conL}
\hat\x := \arg\min_{\x} \|\y-\A\x\|_2 \text{ s.t. } f(\x)\leq f(\x_0).
\end{align}
Strictly speaking, \eqref{eq:conL} is not an instance of \eqref{eq:genLASSO} since it assumes extra prior information, i.e. knowledge of $f(\x_0)$. In fact, it corresponds to a ``constrained" version of the generalized LASSO. Note, that Lagrange duality ensures that there exists value of the regularizer parameter $\la$ in the (unconstrained) generalized LASSO such that the two problems are equivalent. 
As typical in the analysis of the performance of the LASSO, assume that $\z$ has entries i.i.d $\Nn(0,\sigma^2)$ and is independently generated from $\A$. The goal is to characterize the normalized error $\operatorname{NSE}= \|\hat\x-\x_0\|_2^2/\sigma^2$ of \eqref{eq:conL}. In particular, to facilitate the analysis  we focus in the high-SNR regime in which $\sigma^2\rightarrow 0$ (see \cite{ICASSP} for the general case). It can be shown, that this corresponds to the worst-case NSE, i.e. $\sup_{\sigma>0} \operatorname{NSE} = \lim_{\sigma\rightarrow 0} \operatorname{NSE}$ \cite[Sec. 10]{OTH}. It is also shown in \cite[Sec. 7]{OTH}, that in the high-SNR regime, the NSE is same as the NSE of the first-order approximation of \eqref{eq:conL}, obtained after relaxing the constraint set $\Dc_f(\x_0) = \{\vb \ | \ f(\x_0+\vb)\leq f(\x_0)\}$ to (essentially) its conic hull $\Tc_f(\x_0) = \operatorname{Cl}(\operatorname{cone}(  \Dc_f(\x_0) )$,
where $\operatorname{Cl}(\cdot)$ denotes the set closure operator and $\operatorname{cone}(\cdot)$ returns the conic hull of a set. The rationale behind this approximation being tight is that as $\sigma\rightarrow 0$, the squared error also approaches zero. Thus, onwards, we focus on the following:
\begin{align}\label{eq:conL1}
\hat\x := \arg\min_{\x} \|\y-\A\x\|_2 \quad\text{ s.t. }~~ \x-\x_0\in\Tc_f(\x_0).
\end{align}
Adapting the results of Section \ref{sec:app_gen} to \eqref{eq:conL1}, the corresponding (AO) problem becomes:
\begin{align}
\phi(\g,\h,\z)
&=\min_{\substack{\w\in\Tc_f(\x_0)\\\|\w\|_2\leq K}}\max_{\|\ub\|_2\leq 1} (\|\w\|_2\g-\z)^T\ub + \|\ub\|_2\h^T\w.\label{eq:opaopa0}
\end{align}
Recall that $\z$ is assumed to have entries i.i.d $\Nn(0,\sigma^2)$ and be independent of $\A$. Thus, $\|\w\|_2\g-\z$ is statistically identical to a random vector with entries i.i.d $\Nn(0,\|\w\|_2^2+\sigma^2)$ and with some abuse of notation we can rewrite \eqref{eq:opaopa} as
\begin{align}
\phi(\g,\h)
&=\min_{\substack{\w\in\Tc_f(\x_0)\\\|\w\|_2\leq K}}\max_{\|\ub\|_2\leq 1} \sqrt{\|\w\|_2^2+\sigma^2}~\g^T\ub + \|\ub\|_2\h^T\w,\label{eq:opaopa}
\end{align}
where, again, $\g\in\R^m$ has entries i.i.d. $\Nn(0,1)$.

\underline{Remark}: In \eqref{eq:opaopa}, we have constrained $\w$ to a bounded set. In \eqref{eq:conL} the set of minima is trivially a bounded set, since the feasible set is itself bounded. Notice, however,  that this is not the case with the ``relaxed" problem in \eqref{eq:conL1}. Thus, the argument presented in Appendix \ref{sec:tech} to which we appealed in Section \ref{sec:app_gen} does not directly apply here. Instead, we rely on the following argument. Assume that $\Nc(\A)\cap\Tc_f(\x_0) = \{0\}$. When this holds, then the objective function and constraint set in \eqref{eq:conL1} have no common nonzero directions of recession, hence the set of minima is a compact set \cite[Prop.~2.3.2]{Bertsekas}. Thus, under this assumption introducing the boundedness constraint in \eqref{eq:opaopa} is validated. It can be shown that the desired condition is true with overwhelming probability provided that $m>\omega^2(\Tc_f(\x_0)\cap\Sc^{n-1})$, where $\omega$ is a geometric measure of the size of $\Tc_f(\x_0)$, defined later. Hence, in the regime  $m>\omega^2(\Tc_f(\x_0)\cap\Sc^{n-1})$ the solution set of \eqref{eq:conL1} is bounded w.h.p. and the analysis of \eqref{eq:opaopa} is validated. 

\vp
In what follows, we apply the recipe prescribed by Corollary \ref{cor:II} to \eqref{eq:opaopa}. First, we need to normalize \eqref{eq:opaopa} by dividing with $\sqrt{m}$. This corresponds to the same operation applied to the original problem \eqref{eq:conL1}, and, has no effect on the value of the NSE. Next, we work sequentially on meeting the four assumptions of the corollary.

\vp
\emph{(1)}
By its definition in \eqref{eq:xfun} and by \eqref{eq:opaopa} (recall the normalization with $\sqrt{m}$ and that we drop the superscript $(n)$ for simplicity)
$$
\upsilon(\w;\g,\h) = \frac{1}{\sqrt{m}} \max_{\|\ub\|\leq 1}\sqrt{\|\w\|_2^2+\sigma^2}~\g^T\ub + \|\ub\|_2\h^T\w.
$$
Easily, this can be simplified to a scalar maximization:
$$
\upsilon(\w;\g,\h) =  \frac{1}{\sqrt{m}} \max_{0\leq\beta\leq 1}\sqrt{\|\w\|_2^2+\sigma^2}\|\g\|\beta + \beta\h^T\w.
$$
Observe that the function $\nu(\w,\beta;\g,\h)=(1/\sqrt{m})(\sqrt{\|\w\|_2^2+\sigma^2}\|\g\|\beta + \beta\h^T\w)$ above is convex in $\w$ and linear (thus, concave) in $\beta$. Furthermore, the constraint imposed on $\beta$ is convex and compact.

\vp
\emph{(2)}
We assume that $K$ in \eqref{eq:opaopa} is constant, i.e. it does not scale with $n$. The function $\phi_{\|\cdot\|_2}(\cdot,\beta;\g,\h)$ becomes
\begin{align}
\phi_{\|\cdot\|_2}(\cdot,\beta;\g,\h) &= \frac{1}{\sqrt{m}}\min_{\substack{\w\in\Tc_f(\x_0) \\ \|\w\|_2=\alpha }}\sqrt{\|\w\|_2^2+\sigma^2}\|\g\|\beta + \beta\h^T\w \nn\\
&=\frac{1}{\sqrt{m}}\sqrt{\alpha^2+\sigma^2}\|\g\|\beta -\alpha\beta \max_{\substack{\w\in\Tc_f(\x_0) \\ \|\w\|_2=1 }}(-\h)^T\w\label{eq:check}
\end{align}
We need to check convexity of \eqref{eq:check} with respect to $\alpha$. This follows from the second derivative test since
\begin{align}
 \frac{\partial^2 \phi_{\|\cdot\|_2}(\alpha,\beta;\g,\h)}{\partial \alpha^2} = \frac{\sigma^2\|\g\|_2}{\sqrt{m}(\alpha^2+\sigma^2)^{3/2}}\label{eq:secder}
\end{align}
is nonnegative for all $\alpha\in[0,K]$.

\vp
\emph{(3)}
We will compute deterministic function $d:[0,K]\rightarrow\R$ such that 
\beq
\hat\phi_{\|\cdot\|_2}\phi(\alpha;\g,\h):=\max_{0\leq\beta\leq 1}\phi_{\|\cdot\|_2}\phi(\alpha,\beta;\g,\h)\label{eq:2simp}
\eeq
converges to it point-wise for all $\alpha\in[0,K]$.
Let $(\chi)_+:=\max\{\chi,0\}$ for any $\chi\in\R$ and define
$$
\D(\h) := \max_{\w\in\Tc_f(\x_0), \|\w\|_2=1}\h^T\w.
$$
It is easy in view of \eqref{eq:check}  to simplify \eqref{eq:2simp} as
\begin{align}
\hat\phi_{\|\cdot\|_2}(\alpha;\g,\h) = (\phi(\alpha;\g,\h))_+ := \left( \sqrt{\alpha^2+\sigma^2}\frac{\|\g\|_2}{\sqrt{m}} - \alpha \frac{\D(-\h)}{\sqrt{m}}\right)_+ \label{eq:hat}
\end{align}
Consider $d:[0,K]\rightarrow\R$ defined as follows
$$
d(\alpha) := \sqrt{\alpha^2+\sigma^2}\frac{\gamma_m}{\sqrt{m}} - \alpha \frac{\gw}{\sqrt{m}},
$$
where 
\beq
\gamma_m := \E \|\g\|_2 \quad \text{ and } \quad \gw:=\omega(\Tc_f(\x_0)\cap\Sc^{n-1}):= \E \D(\h). \nn
\eeq
It is well known that $\frac{m}{\sqrt{m+1}}\leq \gamma_m \leq \sqrt{m} $ and $\omega(\Tc_f(\x_0)\cap\Sc^{n-1})$ is known as the ``gaussian width"
\footnote{
The gaussian width $\gw$ appears as a fundamental quantity in the study of noiseless compressed sensing, where one wishes to recover an unknown structured signal $\x_0\in\R^n$ from $m<n$ linear equations via 
$
\min f(\x) \text{ s.t. } \A\x = \A\x_0.
$
Earlier works \cite{stojnic2009various,Cha} had proved that $m>\gw^2$ number of measurements suffice for this convex algorithm to uniquely recover $\x_0$. More recently, it was shown independently in \cite{TroppEdge,stojnic2013upper} that $\gw^2$ number of measurements are also necessary for unique recovery. The arguments in \cite{stojnic2013upper} rely on GMT, while \cite{TroppEdge} uses tools from conic integral geometry; see \cite{amelunxen2014gordon} for a connection between those two. 
 It is important to note that the gaussian width $\gw$ admits accurate estimates for a number of important regularizers $f(\cdot)$. For example, for $f(\cdot)=\|\cdot\|_1$ and $\x_0$ k-sparse, it is shown in \cite{Cha,TroppEdge} that $\gw^2\lesssim 2k\log(2n/k)$. See \cite{Cha,TroppEdge,foygel} for more examples.
}
 of $\Tc_f(\x_0)\cap\Sc^{n-1}$. The motivation behind choosing $d(\cdot)$ as a candidate to establish convergence is clear: both functions $\|\cdot\|_2$ and $d(\cdot)$ are 1-Lipschitz (e.g. \cite{gorLem}), thus, from the Gaussian concentration of measure (Proposition \ref{prop:lip}) they concentrate around their means. In order to be able to formally state our convergence results, we need to first specify the scaling of the problem dimensions as they grow to infinity. We assume the ofter called \emph{linear regime} in which $n,m,\gw$ grow to infinity in proportional rates:
 $$
 m/n\rightarrow \zeta\in(0,\infty) \qquad \text{ and } \qquad \text{ $(1-\eps)\gamma_m>\gw>\eps\gamma_m$, for some $0<\eps<1$.}
 $$
 With these, for all $\alpha\in[0,K]$:
\beq
 \phi(\alpha;\g,\h)\rP d(\alpha).\label{eq:c1}
\eeq
 Furthermore, observe that $\gamma_m>\gw$ implies strict non-negativity of $d(\cdot)$. Using this and \eqref{eq:c1}, the desired conclusion follows from \eqref{eq:hat}:
 $$
 \hat\phi_{\|\cdot\|_2}(\alpha;\g,\h) \rP d(\alpha),\quad \forall\alpha\in[0,K].
 $$
\vp
\emph{(4)} Strong convexity of $d(\cdot)$ follows immediately by the second derivative test:
 $$
 \frac{\partial^2 d(\alpha)}{\partial \alpha^2} = \frac{\sigma^2\gamma_m}{\sqrt{m}(\alpha^2+\sigma^2)^{3/2}} \geq \frac{\sigma^2\gamma_m}{\sqrt{m}(K^2+\sigma^2)^{3/2}}>0.
 $$
Furthermore, setting the derivative of $d(\cdot)$ to zero, yields\footnote{At this point recall that the value of the constant $K$ that constraints the optimization variable $\alpha$, can be set arbitrarily. In particular, set $K>\sigma\gw/\sqrt{\gamma_m^2-\gw^2}$.}
$$
\alpha_* = \sigma\frac{\gw}{\sqrt{\gamma_m^2-\gw^2}} \quad \text{ and } \quad d_*:=d(\alpha_*)=\sigma\frac{\sqrt{\gamma_m^2-\gw^2}}{\sqrt{m}}.
$$

\vp
We may now apply the conclusion of Corollary \ref{cor:II}: when $m/n\rightarrow\zeta\in(0,\infty)$ and  $(1-\eps)\gamma_m>\gw>\eps\gamma_m$ for some $\eps\in(0,1)$,
we have
\begin{subequations}\label{eq:thatsIT}
\beq
\lim_{\sigma\rightarrow 0}\frac{\|\y-\A\hat\x\|_2}{\sqrt{m}\sigma} \rP \frac{\sqrt{{\gamma_m^2-\gw^2}}}{\sqrt{m}},
\eeq
\beq
\lim_{\sigma\rightarrow 0}\frac{\|\hat\x-\x_0\|_2^2}{\sigma^2} \rP \frac{\gw^2}{{\gamma_m^2-\gw^2}}.
\eeq
\end{subequations}

Note that \eqref{eq:thatsIT} holds for \emph{arbitrary} structured signals and associated convex  regularizers $f$. In this generality the result was first proved in \cite[Thm. 3.1]{OTH}. The work in \cite{OTH} and in part the current work are highly motivated and build upon the ideas introduced by Stojnic in \cite{StoLASSO}. Stojnic was the first to apply GMT and combine it with a duality argument, and, was able to prove \eqref{eq:thatsIT} for sparse $\x_0$ and $\ell_1$-regularization \cite[Thm. 1]{StoLASSO}. When compared to \cite{StoLASSO} and \cite{OTH}, the derivation of the result as presented in this section is significantly simplified, shortened and insightful. This is due to the machinery offered by Theorem \ref{thm:main}. More significantly, this machinery allowed for the unified treatment of \eqref{eq:genLASSO}.
Continuing with our discussion on the relevant literature, and to the best of our knowledge, the characterization of the NSE in \eqref{eq:thatsIT} for the special case of sparse $\x_0$ and $\ell_1$-regularization first appeared in \cite{donoho2011noise} and was rigorously proved in \cite{bayati2011dynamics} using a framework different than that offered by the GMT.


\section{Related Work and Conclusion}\label{sec:rel}


Starting from the work of Vershynin and Rudelson \cite{rudelson2006sparse}, Gaussian comparison theorems have played instrumental role in developing a 
clear
 understanding of linear inverse problems when the measurement matrix follows the standard Gaussian distribution.
  The idea of combining strong duality with the Gaussian min-max theorem (\GMT)~is originally attributed to Stojnic \cite{stojnic2013meshes}. In a recent line of work he makes repeated use of this powerful idea. In \cite{stojnic2013upper} he applies it to prove that the $\ell_1$-minimization phase transition thresholds of \cite{stojnic2009various,Cha} are tight. A similar observation also appears in \cite{TroppEdge} by Amelunxen et.al.. In these works, the strong duality argument originates from the KKT optimality conditions rather than swapping min-max. Furthermore, Stojnic applies this idea to prove a tight upper bound on the normalized squared error of the LASSO algorithm with $\ell_1$ regularization \cite{StoLASSO}. The result was later generalized and extended in various directions by the current authors (see references in Section \ref{sec:app_tight}). Finally, Stojnic showed how similar ideas can be applied to the  study of the storage capacity of perceptrons \cite{stojnic2013spherical}.

This work is motivated by and builds upon Stojnic's original idea. Our insights and  additional technical effort lead to a succinct statement  of our main result in Theorem \ref{thm:main} and Corollaries \ref{cor:main} and \ref{cor:II}, which all appear to be novel. In Theorem \ref{thm:main} we have quantified explicit (sufficient) conditions that are required for the GMT to be tight. A critical observation amounts to the fact that through a symmetrization trick we can get rid of the term $g\|\x\|_2\|\y\|_2$ in one of the Gaussian processes involved in \GMT. The resulting minimax optimization problem is now convex and the rest follows. 
In Section \ref{sec:app} we showed the power of Theorem \ref{thm:main} by applying it to pinpoint the optimal cost of the LASSO optimization. In particular, we were able to recover a result from \cite{OTH} with substantially less effort and through a more insightful treatment. The direct and simplified nature of Corollary \ref{cor:II}, when compared to the rather complex arguments in \cite{StoLASSO,OTH}, allows for a unifying treatment of \eqref{eq:intro_gen}. 






\bibliography{compbib}

\appendix

\subsection{Gordon's \thename}\label{sec:app_Gor}
Gaussian comparison theorems are powerful tools in probability theory \cite{ledoux}.  A particularly useful such comparison inequality is described by Gordon's comparison theorem. In fact Gordon's theorem, is a generalization of the classical Slepian lemma and Fernique theorem \cite{gorThm}. It was first proved by Y. Gordon in \cite{gorThm}, where it was also shown how it can be used as an alternative to (re)-derive other well-known results in the field. See also \cite{gorGen} for slight generalized versions of the theorem and the classical reference \cite[Chapter~3.3]{ledoux}  for an introduction to gaussian comparison theorems and some applications.

\begin{appthm}[Gordon's Gaussian comparison theorem, \cite{gorThm}]\label{thm:GordonMain}
Let $\left\{X_{ij}\right\}$ and $\left\{Y_{ij}\right\}$, $1\leq i\leq I$, $1\leq j\leq J$, be  centered Gaussian processes such that
\begin{align*}
\begin{cases}
\E X_{ij}^2 = \E Y_{ij}^2, &\text{ for all } i,j,\\
\E X_{ij}X_{ik} \geq \E Y_{ij}Y_{ik}, &\text{ for all } i,j,k,\\
\E X_{ij}X_{\ell k} \leq \E Y_{ij}Y_{\ell k}, &\text{ for all } i\neq\ell \text{ and } j,k.
\end{cases}
\end{align*}
Then, for all  $\lambda_{ij}\in\mathbf{R}$,
$$
\Pro\left( \bigcap_{i=1}^{I} \bigcup_{j=1}^{J} \left[ Y_{ij} \geq \lambda_{ij} \right] \right) \geq \Pro\left( \bigcap_{i=1}^{I} \bigcup_{j=1}^{J} \left[ X_{ij} \geq \lambda_{ij} \right] \right).
$$
\end{appthm}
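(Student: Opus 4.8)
The plan is to prove the inequality by the classical Gaussian interpolation (Slepian--Kahane) method. First I would rewrite the two event-probabilities as expectations of a single deterministic functional. Observe that the event $\bigcap_i\bigcup_j[Y_{ij}\geq\lambda_{ij}]$ has indicator $\prod_{i=1}^I\bigl(1-\prod_{j=1}^J\mathbf{1}[Y_{ij}<\lambda_{ij}]\bigr)$. Replacing each sharp indicator $\mathbf{1}[\,\cdot<\lambda_{ij}]$ by a smooth, nonincreasing approximation $\theta_{ij}$ valued in $[0,1]$, I would set
\[ F(\mathbf{v}) := \prod_{i=1}^I\Bigl(1-\prod_{j=1}^J\theta_{ij}(v_{ij})\Bigr), \qquad \mathbf{v}=(v_{ij})\in\R^{I\times J}, \]
so that $\Pro(\bigcap_i\bigcup_j[Y_{ij}\geq\lambda_{ij}])$ is approximated by $\E[F(\Y)]$, and similarly for $X$. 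It then suffices to prove $\E[F(\Y)]\geq\E[F(\X)]$ and to sharpen $\theta_{ij}$ to the indicators only at the very end.

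To compare the two expectations I would take $\X$ and $\Y$ independent and interpolate via $\Z(t):=\sqrt{t}\,\X+\sqrt{1-t}\,\Y$ for $t\in[0,1]$, noting $\Z(1)=\X$ and $\Z(0)=\Y$. Writing $\rho(t):=\E[F(\Z(t))]$, the goal reduces to $\rho'(t)\leq 0$. Differentiating and applying Gaussian integration by parts separately to the $X$- and $Y$-contributions (the cross terms vanish because $\X\perp\Y$) yields the standard interpolation identity
\[ \rho'(t)=\tfrac12\sum_{i,j}\sum_{\ell,k}\bigl(\E[X_{ij}X_{\ell k}]-\E[Y_{ij}Y_{\ell k}]\bigr)\,\E\bigl[\partial_{ij}\partial_{\ell k}F(\Z(t))\bigr]. \]
The smoothness built into $F$ is exactly what justifies differentiating under the expectation and the integration-by-parts step.

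Next I would sign each term. The diagonal contributions $(i,j)=(\ell,k)$ carry the coefficient $\E[X_{ij}^2]-\E[Y_{ij}^2]=0$ by the first hypothesis and drop out. For the rest I compute the mixed partials from the product form of $F$: writing $P_i=\prod_j\theta_{ij}(v_{ij})$, one has $\partial_{ij}F=-\bigl(\prod_{i'\neq i}(1-P_{i'})\bigr)\,\partial_{ij}P_i$ with $-\partial_{ij}P_i\geq0$, since each $\theta_{ij}$ is nonincreasing and every factor lies in $[0,1]$. Consequently, for two indices in the same block ($i=\ell$, $j\neq k$) I obtain $\partial_{ij}\partial_{ik}F\leq0$, which pairs with the nonnegative coefficient $\E[X_{ij}X_{ik}]-\E[Y_{ij}Y_{ik}]\geq0$ from the second hypothesis; while for indices in different blocks ($i\neq\ell$) I obtain $\partial_{ij}\partial_{\ell k}F\geq0$, which pairs with the nonpositive coefficient $\E[X_{ij}X_{\ell k}]-\E[Y_{ij}Y_{\ell k}]\leq0$ from the third hypothesis. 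In every case the product is $\leq0$, so $\rho'(t)\leq0$ throughout, giving $\E[F(\Y)]=\rho(0)\geq\rho(1)=\E[F(\X)]$.

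Finally, letting the $\theta_{ij}$ converge monotonically to the indicators and invoking dominated convergence recovers the claimed inequality; the passage to the limit is harmless because $\Pro(Y_{ij}=\lambda_{ij})=0$ in the nondegenerate case, with any degenerate coordinates treated by a separate limiting argument. I expect the main obstacle to be the sign bookkeeping in the mixed-partial computation: matching each covariance hypothesis to the correct sign of $\partial_{ij}\partial_{\ell k}F$ is precisely where the product structure of $F$ and the monotonicity of the $\theta_{ij}$ are used, and it is the only place where all three hypotheses are consumed together. A secondary technical point is the rigorous justification of the integration-by-parts identity and the interchange of limit and expectation, which is exactly why the smoothing is introduced at the outset and removed only at the end.
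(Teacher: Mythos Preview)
Your proof is correct and follows the classical Gaussian interpolation (Slepian--Kahane) argument, which is indeed the standard route to Gordon's comparison inequality. However, the paper does \emph{not} actually supply a proof of this statement: Theorem~\ref{thm:GordonMain} is quoted directly from Gordon's original paper \cite{gorThm} and used as a black box. The paper's own contribution in Appendix~\ref{sec:app_Gor} is to \emph{apply} this comparison theorem to the specific pair of processes $Y_{\x,\y}=\x^T\G\y+g\|\x\|\|\y\|$ and $X_{\x,\y}=\|\x\|\g^T\y-\|\y\|\h^T\x$, verify the covariance hypotheses, and then pass from finite index sets to compact sets via a $\delta$-net argument (Lemma~\ref{lem:GorD} and the proof of Proposition~\ref{lem:Gor}). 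So there is nothing to compare your argument against here; you have simply gone one level deeper than the paper and reproduced the proof that the paper outsources to the reference. Your sign bookkeeping on the mixed partials of $F$ is correct, and the smoothing-then-limit treatment of the indicators is the standard way to make the integration by parts rigorous.
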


Gordon's Theorem \ref{thm:GordonMain} establishes a probabilistic comparison between two abstract Gaussian processes $\{X_{ij}\}$ and $\{Y_{ij}\}$ based on conditions on their corresponding covariance structures. Proposition \ref{lem:Gor} is a corollary of Theorem \ref{thm:GordonMain} when applied to specific Gaussian processes.

We begin with using Theorem \ref{thm:GordonMain} to prove an analogue of Proposition \ref{lem:Gor} for discrete sets. The proof is almost identical to the proof of Gordon's original Lemma $3.1$ in \cite{gorLem}. Nevertheless, we include it here for completeness. After the proof of Lemma \ref{lem:GorD}, we use a compactness argument to translate the result to continuous sets and complete the proof of Proposition \ref{lem:Gor}. 

To simplify notation we suppress notation and write $\|\cdot\|$ instead of $\|\cdot\|_2$.

\begin{applem}[Gordon's \thename: Discrete Sets]\label{lem:GorD}
Let $\A\in\mathbb{R}^{{m}\times {n}}$, $g\in\R$, $\g\in\mathbb{R}^{m}$ and $\h\in\mathbb{R}^{n}$ have  entries i.i.d. $\Nn(0,1)$ and be independent of each other. Also, let $\Ic_1\subset\R^{n}$, $\Ic_2\subset\R^{m}$ be finite sets of vectors and $\psi(\cdot,\cdot)$ be a finite function defined on $\Ic_1\times\Ic_2$.
Then, for all $c>0$:
\begin{align*}
\Pro\left(  \min_{\x\in\Ic_1}\max_{\y\in\Ic_2}~\left\{ \y^T\A\x + g\|\x\|\|\y\| + \psi({\x,\y}) ) \right\} \geq c\right) \geq 
\Pro\left(  \min_{\x\in\Ic_1}\max_{\y\in\Ic_2}\left\{ \|\x \| \g^T\y  +  \|\y\| \h^T\x + \psi({\x,\y}))  \right\}\geq c  \right)
\end{align*}
\end{applem}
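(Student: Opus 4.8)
The plan is to realize the two objective processes as instances of the abstract Gaussian processes in Gordon's comparison Theorem \ref{thm:GordonMain} and to fold the deterministic term $\psi$ into the thresholds rather than into the processes. Concretely, I would index a pair of \emph{centered} Gaussian processes by $(\x,\y)\in\Ic_1\times\Ic_2$, treating $\x$ as the ``outer'' index $i$ and $\y$ as the ``inner'' index $j$, and set
\[
Y_{\x,\y} := \y^T\A\x + g\|\x\|\|\y\|, \qquad X_{\x,\y} := \|\x\|\g^T\y + \|\y\|\h^T\x.
\]
The labeling is chosen deliberately so that the process on the \emph{left}-hand side of the claim (the one built from $\A$ and $g$) is the process that Gordon's conclusion endows with the larger probability. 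Since each $\psi(\x,\y)$ is a constant, I would then invoke Theorem \ref{thm:GordonMain} with the thresholds $\lambda_{\x,\y} := c - \psi(\x,\y)$, which is exactly the freedom that the arbitrary $\lambda_{ij}$ in that theorem provides.

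First I would verify the three covariance hypotheses, using independence of $\A,g,\g,\h$ together with $\E[\A_{ij}\A_{kl}]=\delta_{ik}\delta_{jl}$. A direct computation gives $\E X_{\x,\y}^2 = \E Y_{\x,\y}^2 = 2\|\x\|^2\|\y\|^2$, establishing the equal-variance condition. For the same-$i$ condition (fixed $\x$, distinct $\y,\y'$) both covariances evaluate to $\|\x\|^2\big(\y^T\y' + \|\y\|\|\y'\|\big)$, so that condition holds with equality. The only genuine inequality is the distinct-$i$ condition ($\x\neq\x'$, arbitrary $\y,\y'$), where I would compute
\[
\E Y_{\x,\y}Y_{\x',\y'} - \E X_{\x,\y}X_{\x',\y'} = \big(\|\x\|\|\x'\| - \x^T\x'\big)\big(\|\y\|\|\y'\| - \y^T\y'\big).
\]
Both factors are nonnegative by Cauchy--Schwarz, so the difference is $\geq 0$, which is precisely $\E X_{\x,\y}X_{\x',\y'} \leq \E Y_{\x,\y}Y_{\x',\y'}$ as required.

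With the hypotheses in place, Theorem \ref{thm:GordonMain} yields
\[
\Pro\Big(\bigcap_{\x\in\Ic_1}\bigcup_{\y\in\Ic_2}\big[Y_{\x,\y}\geq c-\psi(\x,\y)\big]\Big) \geq \Pro\Big(\bigcap_{\x\in\Ic_1}\bigcup_{\y\in\Ic_2}\big[X_{\x,\y}\geq c-\psi(\x,\y)\big]\Big).
\]
It then remains only to translate each event. Because the sets $\Ic_1,\Ic_2$ are finite, the event $\bigcap_{\x}\bigcup_{\y}[Z_{\x,\y}\geq c-\psi(\x,\y)]$ is exactly $\{\min_{\x\in\Ic_1}\max_{\y\in\Ic_2}(Z_{\x,\y}+\psi(\x,\y))\geq c\}$; applying this identity to both $Z=Y$ and $Z=X$ produces the claimed inequality, which is the discrete version underlying Proposition \ref{lem:Gor}.

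The computations are routine, so the main obstacle is really bookkeeping of two kinds. The first is orienting the two processes correctly: one must assign the $\A,g$ process the role of $Y$ so that the direction of Gordon's conclusion matches the direction of the claim, and the single nontrivial covariance comparison then resolves, via the product of two Cauchy--Schwarz ``defects,'' with exactly the sign Theorem \ref{thm:GordonMain} demands. The second is the observation that $\psi$ should be carried by the thresholds $\lambda_{\x,\y}$ and not by the processes themselves; this keeps the processes centered (so the covariance hypotheses apply verbatim) while losing no generality, and it is the only step that is conceptually, rather than computationally, essential.
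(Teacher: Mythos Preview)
Your proposal is correct and follows essentially the same route as the paper: define the same pair of centered processes (the paper uses $X_{\x,\y}=\|\x\|\g^T\y-\|\y\|\h^T\x$, which differs from yours only by the sign of $\h$ and hence has identical covariance structure), verify the three covariance hypotheses via the Cauchy--Schwarz defect factorization, absorb $\psi$ into the thresholds $\lambda_{\x,\y}=c-\psi(\x,\y)$, and translate the intersection-of-unions event into the min--max event. Your separate treatment of the same-$\x$ and different-$\x$ cases is a slightly more explicit version of the paper's single computation, which notes that the factored difference vanishes when $\x=\x'$.
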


\begin{proof}
Define two Gaussian processes indexed on the set $\Ic_1\times \Ic_2$:
\begin{align*}
Y_{\x,\y} = \x^T \Gb \y + g\|\y\| \|\x\| \quad \text{ and }\quad
X_{\x,\y} =  \|\x \| \g^T\y  -  \|\y\| \h^T\x.
\end{align*}

First, we show that the processes defined satisfy the conditions of Gordon's Theorem \ref{thm:GordonMain}. Clearly, they are both centered. Furthermore, for all $\x,\x'\in\Ic_1$ and $\y,\y'\in\Ic_2$:
\begin{align*}
\E[X_{\x,\y}^2] =  \|\x \|^2 \|\y\|^2 + \|\y\|^2 \|\x\|^2 = \E[Y_{\x,\y}^2],
\end{align*}
and
\begin{align*}
\E[ X_{\x,\y}X_{\x',\y'} ] - \E[ Y_{\x,\y}Y_{\x',\y'} ]  &= \|\x\|\|\x'\|(\y^T\y') + \|\y\|^2(\x^T\x') -  (\x^T\x') (\y^T\y') - \|\y\|\|\y'\| \|\x\|\|\x'\| \notag \\
&= \left(\underbrace{ \|\x\|\|\x'\| -  (\x^T\x') }_{\geq 0} \right) \left( \underbrace{ (\y^T\y') - \|\y\|\|\y'\| }_{\leq 0} \right),
\end{align*}
which is non positive  and equal to zero when $\x=\x'$. 

Next, for each $(\x,\y)\in \Ic_1\times\Ic_2$, let $\la_{\x,\y} = -\psi(\x,\y) + c$ and apply Theorem \ref{thm:GordonMain}. This completes the proof by observing that
$$
\left[\min_{\x\in\Ic_1}\max_{\y\in\Ic_2} \{Y_{\x,\y} + \psi(\x,\y) \} \geq c\right] = \bigcap_{\x\in\Ic_1}\bigcup_{\y\in\Ic_2}\left[ Y_{\x,\y} \geq \la_{\x,\y} \right],
$$
and similar for the process  $X_{\x,\y}$.
\end{proof}

\begin{proof}(of Proposition \ref{lem:Gor})
Denote $R_1:=\max_{\x\in\Sc_\x}\|\x\|$ and  $R_2:=\max_{\y\in\Sc_\y}\|\y\|$.
Fix any $\eps>0$.
Since $\psi(\cdot,\cdot)$ is continuous and the sets $\Sc_\x$,$\Sc_\y$ are compact, $\psi(\cdot,\cdot)$ is uniformly continuous on $\Sc_\x\times\Sc_\y$. Thus, there exists $\delta:=\delta(\eps)>0$ such that for every $(\x,\y),(\tilde\x,\tilde\y)\in\Sc_\x\times\Sc_\y$ with $
\|\begin{bmatrix}\x & \y\end{bmatrix}-\begin{bmatrix}\tilde\x & \tilde\y\end{bmatrix}\|\leq \delta$, we have that $|\psi(\x,\y)-\psi(\tilde\x,\tilde\y)|\leq \eps$. Let $\Sc_\x^\delta$,$\Sc_\y^\delta$ be $\delta$-nets of the sets $\Sc_\x$ and $\Sc_\y$, respectively. Then, for any $\x\in\Sc_\x$, there exists $\x'\in\Sc_\x^\delta$ such that $\|\x-\x'\|\leq\delta$ and an analogous statement holds for $\Sc_\y$. In what follows, for any vector $\vb$ in a set $\Sc$, we denote $\vb'$ the element in the $\delta$-net of $\Sc$ that is the closest to $\vb$ in the usual $\ell_2$-metric.
To simplify notation, denote
$$\alpha(\x,\y) := \y^T\A\x + g\|\x\|\|\y\| + \psi({\x,\y}) \quad \text{ and } \quad \beta(\x,\y) : =  \|\x \| \g^T\y  +  \|\y\| \h^T\x + \psi({\x,\y}).$$
From Lemma \ref{lem:GorD}, we know that for all $c\in\R$:
\begin{align}\label{eq:weknow}
\Pro\left(  \min_{\x\in\Sc_\x^\delta}\max_{\y\in\Sc_\y^\delta}~\alpha(\x,\y) \geq c\right) \geq 
\Pro\left(  \min_{\x\in\Sc_\x^\delta}\max_{\y\in\Sc_\y^\delta} \beta(\x,\y) \geq c  \right).
\end{align}
In what follows we show that constraining the minimax optimizations over only the $\delta$-nets $\Sc_\x^\delta,\Sc_\y^\delta$ instead of the entire sets $\Sc_\x$,$\Sc_\y$, changes the achieved optimal values by only a small amount. 

First, we calculate an upper bound on 
\begin{align*}
 \min_{\x\in\Sc_\x^\delta}\max_{\y\in\Sc_\y^\delta}\alpha(\x,\y) - \min_{\x\in\Sc_\x}\max_{\y\in\Sc_\y}\alpha(\x,\y) &\leq  \min_{\x\in\Sc_\x^\delta}\max_{\y\in\Sc_\y^\delta}\alpha(\x,\y) - \min_{\x\in\Sc_\x}\max_{\y\in\Sc_\y^\delta}\alpha(\x,\y)
=:\alpha(\x_1,\y_1) - \alpha(\x_2,\y_2)\\
&\leq {\max_{\y\in\Sc_\y^\delta}\alpha(\x_2',\y)} - \alpha(\x_2,\y_2)=:\alpha(\x_2',\y_*) - \alpha(\x_2,\y_2)\\
&\leq \alpha(\x_2',\y_*) - \alpha(\x_2,\y_*) \\
&= \y_*^T\A(\x_2'-\x_2) + g\|\y_*\|(\|\x_2'\|-\|\x_2\|) + (\psi(\x_2',\y_*)-\psi(\x_2,\y_*))\\
&\leq  (\|\A\|_2+|g|)\underbrace{\|\y_*\|}_{\leq R_2}\underbrace{\|\x_2'-\x_2\|}_{\leq\delta} + \underbrace{|\psi(\x_2',\y_*)-\psi(\x_2,\y_*)|}_{\leq\eps}\\
&\leq (\|\A\|_2+|g|)R_2\delta + \eps.
\end{align*}
From this, we have that
\begin{align}\label{eq:rel1}
\Pro\left( \min_{\x\in\Sc_\x}\max_{\y\in\Sc_\y}\alpha(\x,\y) \geq c  \right) \geq \Pro\left( \min_{\x\in\Sc_\x^\delta}\max_{\y\in\Sc_\y^\delta}\alpha(\x,\y) \geq c + (\|\A\|_2+|g|)R_2\delta + \eps \right).
\end{align}
Using standard concentration results on Gaussians, it is shown in Lemma \ref{lem:conc1} that for all $t>0$, $$\Pro(\|\A\|_2+|g|\leq \sqrt{m}+\sqrt{n}+1 + t)\geq 1- 2\exp(-t^2/4).$$
This, when combined with \eqref{eq:rel1} yileds:
\begin{align}\label{eq:vrika1}
\Pro\left( \min_{\x\in\Sc_\x}\max_{\y\in\Sc_\y}\alpha(\x,\y) \geq c  \right) \geq \Pro\left( \min_{\x\in\Sc_\x^\delta}\max_{\y\in\Sc_\y^\delta}\alpha(\x,\y) \geq c +(\sqrt{n}+\sqrt{m}+1+t)R_2\delta + \eps \right) - 2\exp(-t^2/4).
\end{align}

Similarly,
\begin{align*}
 \min_{\x\in\Sc_\x^\delta}\max_{\y\in\Sc_\y^\delta}\beta(\x,\y) - \min_{\x\in\Sc_\x}\max_{\y\in\Sc_\y}\beta(\x,\y) 
 &\geq  \min_{\x\in\Sc_\x^\delta}\max_{\y\in\Sc_\y^\delta}\beta(\x,\y) - \min_{\x\in\Sc_\x^\delta}\max_{\y\in\Sc_\y}\beta(\x,\y)
=:\beta(\x_1,\y_1) - \beta(\x_2,\y_2)\\
&\geq \beta(\x_1,\y_1) - \max_{\y\in\Sc_\y}\beta(\x_1,\y) =:\beta(\x_1,\y_1) - \beta(\x_1,\y_*)\\
&\geq \beta(\x_1,\y_*') - \beta(\x_1,\y_*) \\
&=  \|\x_1\|\g^T(\y_*'-\y_*) + (\|\y_*'\|-\|\y_*\|)\h^T\x_1 + (\psi(\x_1,\y_*')-\psi(\x_1,\y_*))\\
&\geq  -(\|\g\|+\|\h\|)\underbrace{\|\x_1\|}_{\leq R_1}\underbrace{\|\y_*'-\y_*\|}_{\leq\delta} - \underbrace{|\psi(\x_1,\y_*')-\psi(\x_1,\y_*)|}_{\leq\eps}\\
&\geq -(\|\g\|+\|\h\|)R_1\delta - \eps.
\end{align*}
Thus,
\begin{align*}
\Pro\left( \min_{\x\in\Sc_\x}\max_{\y\in\Sc_\y}\beta(\x,\y) \geq c + (\|\g\|+\|\h\|)R_1\delta + \eps \right) \leq \Pro\left( \min_{\x\in\Sc_\x^\delta}\max_{\y\in\Sc_\y^\delta}\beta(\x,\y) \geq c \right),
\end{align*}
and a further application of Lemma \ref{lem:conc1} shows that for all $t>0$:
\begin{align}\label{eq:vrika2}
\Pro\left( \min_{\x\in\Sc_\x}\max_{\y\in\Sc_\y}\beta(\x,\y) \geq c + (\sqrt{n}+\sqrt{m}+t)R_2\delta + \eps \right) - 2\exp(-t^2/4)\leq \Pro\left( \min_{\x\in\Sc_\x^\delta}\max_{\y\in\Sc_\y^\delta}\beta(\x,\y) \geq c \right),
\end{align}
Now, we can apply \eqref{eq:weknow} in order to combine \eqref{eq:vrika1} and \eqref{eq:vrika2} to yield the following:
\begin{align*}
\Pro\left( \min_{\x\in\Sc_\x}\max_{\y\in\Sc_\y}\alpha(\x,\y) \geq c  \right) \geq \Pro\left( \min_{\x\in\Sc_\x}\max_{\y\in\Sc_\y}\beta(\x,\y) \geq c + (\sqrt{n}+\sqrt{m}+1 + t)(R_1+R_2)\delta + 2\eps \right) - 4\exp(-t^2/4).
\end{align*}
This holds for all $\eps>0$ and all $t>0$. In particular, set $t=\delta^{-\frac{1}{2}}$ and take the limit of the right-hand side as $\eps\rightarrow 0$. Then, $t\rightarrow\infty$ and we can of course choose $\delta\rightarrow 0$, which proves that
\begin{align}\nn
\Pro\left( \min_{\x\in\Sc_\x}\max_{\y\in\Sc_\y}\alpha(\x,\y) \geq c  \right) \geq \Pro\left( \min_{\x\in\Sc_\x}\max_{\y\in\Sc_\y}\beta(\x,\y) > c  \right).
\end{align}
\end{proof}

\subsection{Auxiliary Results}

\begin{appdefn}[Lipschitz]
We say that a function $f:\R^d\rightarrow\R$ is Lipschitz with constant $L$ or is  $L$-Lipschitz if $|f(\x)-f(\y)|\leq L\|\x-\y\|$ for all $\x,\y\in\R^d$.
\end{appdefn}

\begin{apppropo}[Gaussian Lipschitz concentration]\label{prop:lip}(\cite[Theorem 5.6]{boucheron2013concentration})
Let $\x\in\R^d$ have i.i.d. $\Nn(0,1)$ entries and $f:\R^d\rightarrow\R$ be $L$-Lipschitz. Then, each one of the events $\{f(\x) > \E f(\x) + t \}$ and $\{f(\x) < \E f(\x) - t \}$ occurs with probability no greater than $\exp\left( -{t^2}/({2L^2}) \right)$.
\end{apppropo}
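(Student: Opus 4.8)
The plan is to control the moment generating function (Laplace transform) of the centered variable $f(\x)-\E f(\x)$ and then apply a Chernoff bound. Since $-f$ is also $L$-Lipschitz and interchanges the two events (indeed $\Pro(f(\x)<\E f(\x)-t)=\Pro((-f)(\x)-\E(-f)(\x)>t)$), it suffices to treat the upper-tail event $\{f(\x)>\E f(\x)+t\}$. Thus I reduce the whole statement to the single sub-Gaussian bound $\E\big[\exp(\lambda(f(\x)-\E f(\x)))\big]\le \exp(\lambda^2 L^2/2)$ for all $\lambda>0$. Granting this, the Chernoff inequality gives $\Pro(f(\x)-\E f(\x)>t)\le \exp(\lambda^2 L^2/2-\lambda t)$, and optimizing over $\lambda$ at $\lambda=t/L^2$ yields exactly $\exp(-t^2/(2L^2))$, the claimed rate with the sharp constant.

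For the MGF bound I would first assume $f$ is smooth with $\|\nabla f\|\le L$ everywhere, and run \emph{Herbst's argument} off the Gaussian logarithmic Sobolev inequality: for the standard Gaussian measure $\gamma_d$ on $\R^d$ one has $\operatorname{Ent}_{\gamma_d}(g^2)\le 2\,\E\|\nabla g\|^2$. Applying this with $g=\exp(\lambda f/2)$ and writing $H(\lambda):=\E[\exp(\lambda f)]$, the left-hand side equals $\lambda H'(\lambda)-H(\lambda)\log H(\lambda)$, while the right-hand side is at most $(\lambda^2 L^2/2)\,H(\lambda)$ because $\|\nabla g\|^2=(\lambda^2/4)\|\nabla f\|^2 g^2\le(\lambda^2 L^2/4)g^2$. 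Dividing by $\lambda^2 H(\lambda)$ turns this into $\frac{d}{d\lambda}\big(\lambda^{-1}\log H(\lambda)\big)\le L^2/2$. Since $\lambda^{-1}\log H(\lambda)\to \E f$ as $\lambda\to0^+$ (because $H(0)=1$, $H'(0)=\E f$), integrating from $0$ to $\lambda$ gives $\log H(\lambda)\le \lambda\,\E f+\lambda^2 L^2/2$, which is precisely the desired bound after subtracting $\lambda\,\E f$.

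The technical obstacle is that a general $L$-Lipschitz $f$ is only differentiable almost everywhere (Rademacher's theorem) and need not be smooth, so the computation above is not literally valid. I would handle this by mollification: set $f_\epsilon:=f*\phi_\epsilon$ for a Gaussian kernel $\phi_\epsilon$, so that $f_\epsilon$ is smooth, remains $L$-Lipschitz (hence $\|\nabla f_\epsilon\|\le L$), and satisfies $f_\epsilon\to f$ locally uniformly with $\E f_\epsilon\to\E f$. Applying the smooth case to $f_\epsilon$ and letting $\epsilon\to0$ transfers the MGF bound to $f$, the limit being justified by dominated convergence together with the uniform Lipschitz control and the Gaussian integrability of $\exp(\lambda f)$. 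The genuinely substantive step, and the one I would develop most carefully, is establishing the Gaussian log-Sobolev inequality itself; this is the real content and can be obtained either by tensorizing the one-dimensional inequality (provable from a two-point estimate plus a central-limit argument) or through the Ornstein--Uhlenbeck semigroup and hypercontractivity, with everything downstream being routine. An alternative route that also attains the sharp constant $1/(2L^2)$ is to invoke the Gaussian isoperimetric inequality directly, but the Herbst argument is the most self-contained to carry out.
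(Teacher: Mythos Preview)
Your argument is correct and is in fact the standard route to the sharp constant: reduce to the upper tail by symmetry, establish the sub-Gaussian MGF bound $\E e^{\lambda(f(\x)-\E f(\x))}\le e^{\lambda^2 L^2/2}$ via Herbst's argument from the Gaussian log-Sobolev inequality, mollify to pass from smooth to general Lipschitz $f$, and finish with Chernoff. The calculus you sketch (the differential inequality for $\lambda^{-1}\log H(\lambda)$ and the limit at $\lambda\to 0^+$) is accurate.

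There is nothing to compare against in the paper itself, however: the proposition is not proved there but simply quoted from \cite[Theorem~5.6]{boucheron2013concentration}. That reference proves the result by exactly the entropy method you outline, so your proposal is not a different route but rather a faithful reconstruction of the cited proof. The only part you flag as ``substantive'' --- the Gaussian log-Sobolev inequality --- is indeed the real input, and either of the approaches you mention (tensorization of the one-dimensional case, or the Ornstein--Uhlenbeck semigroup) is standard and appears in that text.
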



\begin{applem}\label{lem:conc1}
Let $\A\in\mathbb{R}^{{m}\times {n}}$, $g\in\R$, $\g\in\mathbb{R}^{m}$ and $\h\in\mathbb{R}^{n}$ have  entries i.i.d. $\Nn(0,1)$ and be independent of each other. Then, for all $t>0$, each one of the events
\begin{align}
\{ \|\A\|_2 + |g| \leq \sqrt{n}+\sqrt{m} + 1 + t \} \quad \text{ and } \quad \{ \|\h\|_2 + \|\g\|_2 \leq \sqrt{n}+\sqrt{m} + t \},
\end{align}
holds with probability at least $1 - 2\exp(-t^2/4)$.
\end{applem}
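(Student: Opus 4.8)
The plan is to recognize each of the two quantities as a Lipschitz function of an underlying standard Gaussian vector and then invoke the Gaussian Lipschitz concentration of Proposition~\ref{prop:lip}, after bounding the relevant expectations by the claimed deterministic thresholds. Since Proposition~\ref{prop:lip} already supplies one-sided $\exp(-t^2/(2L^2))$ tail bounds, the whole lemma reduces to (a) computing Lipschitz constants and (b) bounding means; the stated factor $2$ then comes for free as slack.

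First I would fix the Lipschitz constants. Viewing the entries of $\A$ as a single vector and appending $g$, I treat $(\A,g)\mapsto\|\A\|_2+|g|$ as a function on $\R^{mn+1}$. The map $\A\mapsto\|\A\|_2$ is $1$-Lipschitz in the Frobenius metric, since $|\,\|\A\|_2-\|\A'\|_2\,|\leq\|\A-\A'\|_2\leq\|\A-\A'\|_F$, and $g\mapsto|g|$ is clearly $1$-Lipschitz. Adding two functions that are $1$-Lipschitz in disjoint coordinate blocks and using $a+b\leq\sqrt{2}\sqrt{a^2+b^2}$ shows the sum is $\sqrt{2}$-Lipschitz in the joint Euclidean norm. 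The identical argument applies to $(\h,\g)\mapsto\|\h\|_2+\|\g\|_2$ on $\R^{n+m}$, which is likewise $\sqrt{2}$-Lipschitz.

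Next I would bound the expectations. By Jensen, $\E\|\g\|_2\leq(\E\|\g\|_2^2)^{1/2}=\sqrt{m}$ and similarly $\E\|\h\|_2\leq\sqrt{n}$, so $\E(\|\h\|_2+\|\g\|_2)\leq\sqrt{m}+\sqrt{n}$. For a standard normal $g$ we have $\E|g|=\sqrt{2/\pi}<1$. The one genuinely nontrivial input is the operator-norm bound $\E\|\A\|_2\leq\sqrt{m}+\sqrt{n}$; this is the standard consequence of a Gaussian comparison (Slepian/Gordon) inequality and is exactly the kind of statement produced by Gordon's Theorem~\ref{thm:GordonMain}. Combining these yields $\E(\|\A\|_2+|g|)\leq\sqrt{m}+\sqrt{n}+1$. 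This expectation bound is the main technical obstacle, as the concentration step is otherwise mechanical.

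Finally I would conclude. Because the deterministic thresholds dominate the means, the events $\{\|\A\|_2+|g|>\sqrt{m}+\sqrt{n}+1+t\}$ and $\{\|\h\|_2+\|\g\|_2>\sqrt{m}+\sqrt{n}+t\}$ are each contained in the corresponding event $\{f>\E f+t\}$. Applying Proposition~\ref{prop:lip} with $L=\sqrt{2}$ bounds each probability by $\exp(-t^2/(2\cdot 2))=\exp(-t^2/4)\leq 2\exp(-t^2/4)$, and taking complements gives the claim. A cruder alternative that splits $t$ across the two summands via a union bound also works, but degrades the exponent to $t^2/8$, which is why the joint $\sqrt{2}$-Lipschitz route is preferable.
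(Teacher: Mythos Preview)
Your proposal is correct and in fact slightly cleaner than the paper's own argument. The paper does \emph{not} treat $\|\A\|_2+|g|$ (resp.\ $\|\h\|_2+\|\g\|_2$) as a single $\sqrt{2}$-Lipschitz function; instead it invokes a separate tail bound for each summand (citing the standard operator-norm estimate $\Pro(\|\A\|_2>\sqrt{m}+\sqrt{n}+t)\le e^{-t^2/2}$ and Proposition~\ref{prop:lip} for $|g|$, $\|\g\|_2$, $\|\h\|_2$), then splits the deviation as $t=t/2+t/2$ and union-bounds. That is precisely the ``cruder alternative'' you mention in your last paragraph, and as you note it really yields $2\exp(-t^2/8)$ rather than the $2\exp(-t^2/4)$ the paper records. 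Your joint-Lipschitz route avoids this loss: with $L=\sqrt{2}$ a single application of Proposition~\ref{prop:lip} gives $\exp(-t^2/4)$ directly, so the stated bound holds (even without the factor~$2$). The only external ingredient you need is $\E\|\A\|_2\le\sqrt{m}+\sqrt{n}$, which is the same Gordon/Slepian input underlying the tail bound the paper quotes, so neither approach is cheaper in that respect.
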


\begin{proof}
A well-known non-asymptotic bound on the largest singular value of an $m\times n$ Gaussian matrix shows (e.g. \cite[Corollary 5.35]{vershynin2010introduction}) that for all $t>0$:
\begin{align*}
\Pro\left(\|\A\|_2 > \sqrt{m} + \sqrt{n} + t\right)\leq \exp(-t^2/2).
\end{align*}
Also, $\|\cdot\|_2$ is an $1$-Lipschitz function and for a standard gaussian vector $\vb\in\R^d$: $\Exp\|\vb\|_2\leq\sqrt{d}$ . Applying Proposition \ref{prop:lip} we have that for all $t>0$ the events $\{|g|> 1+t\}$, $\{\|\g\|_2 > \sqrt{m} + t\}$ and $\{\|\h\|_2 > \sqrt{n} + t\}$, each one occurs with probability no larger than $\exp(-t^2/2)$. Combining those, 
\begin{align*}
\Pro\left( \|\A\|_2 + |g| \leq \sqrt{n}+\sqrt{m} + 1 + t \right) &\geq \Pro\left( \|\A\|_2  \leq \sqrt{n}+\sqrt{m}+ t/2~,~ |g| \leq 1 + t/2 \right) 
\\&\geq 1 - \Pro\left( \|\A\|_2  > \sqrt{n}+\sqrt{m}+ t/2\right) -\Pro\left(~ |g| > 1 + t/2 \right)
\\&\geq 1 - 2\exp(-t^2/4).
\end{align*}
The proof of the second statement of the lemma is identical and is omitted for brevity.
\end{proof}



\begin{applem}[Lipschitzness of Gordon's Optimization]\label{lem:lip}
Let $\Sc_1\subset\R^n$, $\Sc_\y\subset\R^m$ be compact sets and function $\Lc:\R^m\times\R^n\rightarrow\R$:
\begin{align*}
\Lc(\g,\h) &:= \min_{\x\in\Sc_1}\max_{\y\in\Sc_\y}~\|\x \|_2 \g^T\y  +  \|\y\|_2 \h^T\x + \f({\x,\y}).
\end{align*}
Further let $R_1 = \max_{\x\in\Sc_1}\|\x\|_2$ and $R_2 = \max_{\y\in\Sc_\y}\|\y\|_2$.
Then, $\Lc(\g,\h)$ is Lipschitz with constant $\sqrt{2}R_1R_2$.
\end{applem}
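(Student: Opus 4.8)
The plan is to reduce the claim to the elementary principle that a pointwise infimum or supremum of a family of $L$-Lipschitz functions is again $L$-Lipschitz, and then to exhibit a uniform Lipschitz constant for the ``atoms'' of the min-max. Concretely, for fixed $(\x,\y)\in\Sc_1\times\Sc_\y$ I would write $F_{\x,\y}(\g,\h):=\|\x\|_2\g^T\y+\|\y\|_2\h^T\x+\f(\x,\y)$, which is \emph{affine} in the pair $(\g,\h)$ with gradient $(\|\x\|_2\y,\ \|\y\|_2\x)\in\R^{m}\times\R^{n}$. The first step is to record that the Euclidean norm of this gradient is
$$\sqrt{\|\x\|_2^2\|\y\|_2^2+\|\y\|_2^2\|\x\|_2^2}=\sqrt{2}\,\|\x\|_2\|\y\|_2\leq\sqrt{2}\,R_1R_2,$$
so that \emph{every} $F_{\x,\y}$ is $\sqrt{2}R_1R_2$-Lipschitz in $(\g,\h)$, uniformly over the compact index set $\Sc_1\times\Sc_\y$. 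Given this uniform bound, the lemma follows by applying the inf/sup-preserves-Lipschitz principle twice: for each fixed $\x$ the map $(\g,\h)\mapsto\max_{\y\in\Sc_\y}F_{\x,\y}(\g,\h)$ is $\sqrt{2}R_1R_2$-Lipschitz, and then the outer minimum $(\g,\h)\mapsto\min_{\x\in\Sc_1}\max_{\y\in\Sc_\y}F_{\x,\y}(\g,\h)=\Lc(\g,\h)$ inherits the same constant.

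Rather than invoke the principle abstractly, I would carry out the direct sandwiching estimate that realizes it, since it is short and self-contained. Fix two pairs $(\g,\h)$ and $(\g',\h')$, let $\x^\star$ attain the outer minimum defining $\Lc(\g',\h')$, and let $\y^\star$ attain the inner maximum of $F_{\x^\star,\cdot}(\g,\h)$. Because $\x^\star$ is feasible (though possibly suboptimal) for the $(\g,\h)$ problem and $\y^\star$ is feasible for the inner max at $(\g',\h')$,
$$\Lc(\g,\h)-\Lc(\g',\h')\leq \max_{\y\in\Sc_\y}F_{\x^\star,\y}(\g,\h)-\max_{\y\in\Sc_\y}F_{\x^\star,\y}(\g',\h')\leq F_{\x^\star,\y^\star}(\g,\h)-F_{\x^\star,\y^\star}(\g',\h').$$
The right-hand side equals $\|\x^\star\|_2(\g-\g')^T\y^\star+\|\y^\star\|_2(\h-\h')^T\x^\star$, which by Cauchy--Schwarz is at most $\|\x^\star\|_2\|\y^\star\|_2\big(\|\g-\g'\|_2+\|\h-\h'\|_2\big)\leq\sqrt{2}\,R_1R_2\,\|(\g,\h)-(\g',\h')\|_2$, the last inequality combining $\|\x^\star\|_2\|\y^\star\|_2\leq R_1R_2$ with $a+b\leq\sqrt{2}\sqrt{a^2+b^2}$. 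Exchanging the roles of the two pairs yields the matching lower bound, hence the full estimate $|\Lc(\g,\h)-\Lc(\g',\h')|\leq\sqrt{2}R_1R_2\|(\g,\h)-(\g',\h')\|_2$.

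The main obstacle is purely bookkeeping around the two nested optimization layers: the optimizers for $(\g,\h)$ and for $(\g',\h')$ need not coincide, so one cannot differentiate $\Lc$ naively. The resolution is exactly the asymmetric sandwiching above, where a minimizer of one instance is used as a \emph{feasible} test point in the other and an inner maximizer is used analogously; this trades the unavailable envelope-theorem identity for one-sided inequalities that suffice. The only other point requiring care is extracting the precise constant $\sqrt{2}R_1R_2$: it is essential to bound the two gradient blocks $\|\x\|_2\y$ and $\|\y\|_2\x$ \emph{jointly} in the $\R^{m+n}$ norm (giving the $\sqrt{2}$) rather than summing their norms, and to use compactness of $\Sc_1,\Sc_\y$ only through the uniform envelope $\|\x\|_2\|\y\|_2\leq R_1R_2$.
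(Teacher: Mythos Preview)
Your proof is correct and follows essentially the same route as the paper's: fix two inputs, use the outer minimizer of one instance as a feasible test point in the other, use the inner maximizer analogously, and finish with Cauchy--Schwarz to extract the constant $\sqrt{2}R_1R_2$. The only cosmetic difference is that the paper applies Cauchy--Schwarz in one shot to the stacked vector $(\|\x\|_2\y,\|\y\|_2\x)$, whereas you first bound each inner product separately and then use $a+b\leq\sqrt{2}\sqrt{a^2+b^2}$; both yield the identical constant.
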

\begin{proof}
Fix any two pairs ($\g_1,\h_1$) and ($\g_2,\h_2$)  and let 
$$(\x_2,\y_2) = \arg\min_{\x\in\Sc_1}\max_{\y\in\Sc_\y}\|\x\|\g_2^T\y + \|\y\|\h_2^T\x + \psi(\x,\y),$$ 
and
$$\y_* = \arg\max_{\y\in\Sc_\y}\|\x_2\|\g_1^T\y + \|\y\|\h_1^T\x_2 + \psi(\x_2,\y).$$
Clearly,
$$ \Lc(\g_1,\h_1) \leq \|\x_2\|\g_1^T\y_* + \|\y_*\|\h_1^T\x_2 + \psi(\x_2,\y_*),$$
and
$$ \Lc(\g_2,\h_2) \geq \|\x_2\|\g_2^T\y_* + \|\y_*\|\h_2^T\x_2 + \psi(\x_2,\y_*),$$
Without loss of generality, assume $\Lc(\g_1,\h_1)\geq \Lc(\g_2,\h_2)$. Then,
\begin{align*}
\Lc(\g_1,\h_1) - \Lc(\g_2,\h_2) &\leq \|\x_2\|\g_1^T\y_* + \|\y_*\|\h_1^T\x_2 + \psi(\x_2,\y_*) - ( \|\x_2\|\g_2^T\y_* + \|\y_*\|\h_2^T\x_2 + \psi(\x_2,\y_*))\\
&\leq\|\x_2\|\y_*^T(\g_1 - \g_2) + \|\y_*\|\x_2^T(\h_1 - \h_2)\\
&\leq \sqrt{\|\x_2\|^2\|\y_*\|^2 + \|\y_*\|^2\|\x_2\|^2 }\sqrt{\|\g_1 - \g_2\|^2+\|\h_1 - \h_2\|^2}\\
&\leq R_1R_2\sqrt{2}\sqrt{\|\g_1 - \g_2\|^2+\|\h_1 - \h_2\|^2},
\end{align*}
where the penultimate inequality follows from Cauchy-Schwarz.
\end{proof}

\subsection{Proof of statement (iii) of Theorem  \ref{thm:main}}\label{sec:app_thm_proof}
\noindent\underline{Proof of \eqref{eq:norms}:} 
We start with introducing some notation that will simplify the exposition. In what follows, $\x$ is always constrained to belong to the set $\Sc_\x^\mn$; we simply write $\min_{\x}$ instead of $\min_{\x\in\Sc_\x^\mn}$. We will say that a sequence of events $\Ec^\mn$ holds/occurs with probability approaching (w.p.a.) 0 (or 1), if $\lim_{n\rightarrow\infty}\Pro(\Ec^\mn)=0$, (or 1).
Denote
$$
\ell(\eta) := \{ \alpha ~|~ |\alpha-\alpha_*| > \eta \}.
$$
We will prove that for all $\eta>0$, the event
 $\|\xPn\| \in \ell(\eta)$
 holds w.p.a. 1. 
 
 Consider the function $\Xfun:\Sc_\x^\mn\rightarrow\R$:
$$
\Xfunx = \max_{\y\in\Sc_\y^\mn} \y^T\G\x + \psi(\x,\y).
$$
Observe that $\Phi^\mn(\G)=\min_{\x}\Xfunx=\Xfun(\xPn;\G)$.
It is not hard to see that it suffices to prove that for all $\eta>0$ there exists $\delta:=\delta(\eta)>0$ such that 
\begin{align}\label{eq:goal}
\min_{\|\x\|\in\ell(\eta)}\Xfunx < \min_\x \Xfunx + \delta 
\end{align}
occurs w.p.a. 0.

In what follows, fix any $\eta>0$.
Proving \eqref{eq:goal} takes the following two steps: (i) upper bound $\min_{\x\in\Sc_\x^\mn}\Xfunx$, and (ii) lower bound $\min_{\|\x\|\in\ell(\eta)}\Xfunx$.

\emph{Step 1:} Fix some $\eps_1>0$ and consider the following event
\beq\label{eq:Ec}
\Ec^\mn(\eps_1) = \{   \min_{\x}\Xfunx > d_* + \eps_1  \}.
\eeq
Then, we may use statement (ii) of the theorem  (cf. \eqref{eq:dual}) to show that
\begin{align}\nn
\Pro(\Ec^\mn(\eps_1)) = \Pro( \Phi^\mn(\G) > d_* + \eps_1) \leq 2\Pro( \phi^\mn(\g,\h) \geq d_* + \eps_1 )
\end{align}
But, $\phi^\mn(\g,\h)\rP d_*$ by hypothesis of the theorem. Therefore,
$\Ec^\mn$ occurs w.p.a. 0.

\vspace{5pt}
\emph{Step 2:} 
Fix some $\eps_2>0$ and consider the following event:
\beq\label{eq:Hc}
\Hc(\eps_2):= \{ \min_{\|\x\|\in\ell(\eta)}\Xfunx < d_* + \eps_2 \}.
\eeq
Using statement (i) of the theorem (cf. \eqref{eq:dual}) we have
\begin{align}\label{eq:s1}
 \Pro(\Hc(\eps_2) ) \leq 2\Pro( \min_{\|\x\|\in\ell(\eta)}\xfunx \leq d_* + \eps_2 ).
\end{align}
We will upper bound the probability on the right hand side by conditioning on the event
$$
\{ \|\xpn\| \notin \ell(\eta/2) \},
$$
which occurs w.p.a. 1, by assumption. In this event, it is not hard to see that 
$$
\|\x\| \in \ell(\eta) \Rightarrow | \|\x\| - \|\xpn\| | \geq \eta/2.
$$
That is, conditioned on $\Ec^\mn$, the probability in \eqref{eq:s1} is further upper bounded by
\beq\label{eq:s2}
\Pro(~ \min_{| \|\x\| - \|\xpn\| | \geq \eta/2}\xfunx \leq d_* + \eps_2 ~).
\eeq
We will condition once more, only this time it will be on the event
$$
\{ \phi^\mn(\g,\h) \geq d_* - \eps_2/2 \},
$$
which occurs w.p.a. 1, by assumption. In this event, the probability in \eqref{eq:s2} is further upper bounded by
\beq\label{eq:s3}
\Pro(~ \min_{| \|\x\| - \|\xpn\| | \geq \eta/2}\xfunx \leq \phi^\mn(\g\,\h) + \eps_2/2 ~).
\eeq
Finally, we condition on the event
$$
\{ \xfunx \geq \phi^\mn(\g,\h) + \tau ( \|\x\| - \|\xpn\| )^2, ~\forall\x\in\Sc_\x \},
$$
which also occurs w.p.a. 1, by assumption. In this event, 
$$
 \min_{| \|\x\| - \|\xpn\| | \geq \eta/2}\xfunx \geq  \phi^\mn(\g,\h) + \tau(\eta/2)^2.
$$
Thus,
the probability in \eqref{eq:s2} is further upper bounded by
\beq\label{eq:s4}
\Pro(~ \tau(\eta/2)^2 \leq \eps_2/2 ~),
\eeq
which is of course a deterministic event. 
To sum up, following the chain of inequalities implied by \eqref{eq:s1}-\eqref{eq:s4}, we find that 
$$
\Pro( \Hc(\eps_2) ) \leq 2 \Pro(~ \tau(\eta/2)^2 \leq \eps_2/2 ~) + p^\mn(\eps_2),
$$
where $p^\mn(\eps_2)$ converges to $0$ as $n\rightarrow \infty$. In particular, $ \Hc(\eps_2)$ occurs w.p.a. 0, for all $\eps_2$ such that $ \eps_2 < 2\tau(\eta/2)^2$.

\vspace{5pt}
We are now ready to conclude the proof. For any $\eta>0$, choose $\eps_2:=\eps(\eta):=\tau(\eta/2)^2>0$, $\eps_1:=\eps_2/2$ and $\delta:=\eps_2/4>0$. Consider the events $\Ec(\eps_1)$ and $\Hc(\eps_2)$ as defined in \eqref{eq:Ec} and \eqref{eq:Hc}, respectively. For the particular choice of $\eps_1,\eps_2$ both events occur w.p.a. 0. Condition on both the complements of these events. Then, 
the probability of the event in \eqref{eq:goal} is upper bounded by 
$$
\Pro( d_* + \eps_2 < d_* + \eps_1 + \delta ) + p^\mn(\eps_1,\eps_2) = \Pro( 2 < 1  ) + p^\mn(\eps_1,\eps_2)= p^\mn(\eps_1,\eps_2),
$$
where $p^\mn(\eps_1,\eps_2)$ converges to 0 as $n\rightarrow\infty$.
This concludes the proof.


\subsection{An  alternative formulation of statement (iii) of Theorem \ref{thm:main}}\label{sec:recipe}

The main result of this section is Corollary \ref{cor:II}.
In order to set the stage for the statement of the corollary, we consider a simple example. Although somewhat trivial,  we will see later in Section \ref{sec:app}, that a slight modification to this example corresponds to the LASSO problem, cf. \eqref{eq:intro_LASSO}. Application of Corollary \ref{cor:II} to the analysis of the LASSO is detailed in Section \ref{sec:details}.

Let $\Sc^\mn_\x=\Kc^\mn\cap (K \Bc^n)$, $\Sc^\mn_\y=\Bc^m$ and $\psi^\mn(\x,\y) =0$, where $\Kc^\mn$ is a sequence of convex closed cones in $\R^n$, $\Bc^k$ denotes the unit $\ell_2$-ball in $\R^k$ and $K>0$ is a fixed constant. 
 In this case, $\xfunx$ simplifies as follows
\begin{align}
\xfunx  &= \max_{\|\y\|_2\leq 1} \|\x\|_2\g^T\y + \|\y\|_2\h^T\x -\z^T\y = \max_{0\leq \beta \leq 1} \beta \|\x\|_2\|\g\|_2 + \beta\h^T\x ,\label{eq:above99}
\end{align}
where we have used  $\max_{\|\y\|_2=\beta}\g^T\y = \beta\|\g\|_2$ to reduce the maximization into one that involves only scalars. What is more, the objective function in \eqref{eq:above99} is now convex in $\x$ and linear (thus, concave) in $\beta$.  This, and the fact that both the constraint sets are convex and compact permits flipping the order of min-max in \eqref{eq:2a} \cite[Cor.~37.3.2]{Roc70}:
\begin{align}
\phi^\mn(\g,\h) =  \min_{\x\in\Sc^\mn_\x}\max_{0\leq \beta \leq 1} \beta \|\x\|_2\|\g\|_2 + \beta\h^T\x  &= \max_{0\leq \beta \leq 1}\min_{\x\in\Sc^\mn_\x} \beta \|\x\|_2\|\g\|_2 + \beta\h^T\x , \nn
\\
&= \max_{0\leq \beta \leq 1} \min_{0\leq\alpha\leq K} \underbrace{ \min_{\substack{\x\in\Kc^\mn \\ \|\x\|_2=\alpha}} \beta \|\x\|_2\|\g\|_2 + \beta\h^T\x }_{:=\phi^\mn_\rn(\alpha,\beta;\g,\h) }.\label{eq:above100}
\end{align}
The function $\phi^\mn_\rn(\alpha,\bb;\g,\h)$ as defined in \eqref{eq:above100} is easy to evaluate, using the homogeneity of the cone $\Kc^\mn$:
$$
\phi^\mn_\rn(\alpha,\beta;\g,\h)  = \alpha\beta\|\g\|_2 - \alpha\beta\max_{\substack{\x\in\Kc^\mn \\ \|\x\|_2=1}}\h^T\x, 
$$
This function is linear (thus, convex) in $\alpha$ and is of course concave in $\beta$. Hence, we can flip the order of min-max in \eqref{eq:above100} to conclude with
\begin{align}
\phi^\mn(\g,\h) =  \min_{0\leq\alpha\leq K} \max_{0\leq \beta \leq 1}\{  \alpha\beta\|\g\|_2 - \alpha\beta\max_{\substack{\x\in\Kc^\mn \\ \|\x\|_2=1}}\h^T\x\},\label{eq:above101}
\end{align}
\eqref{eq:above101} characterizes gordon's optimization in \eqref{eq:2a} as a rather simple scalar min-max problem. Furthermore, it is seen that any minimizer $\alpha_*$ of \eqref{eq:above101} corresponds to the $\ell_2$-norm $\|\xp\|_2$ of some optimal $\xp$ in \eqref{eq:2a}. Hence, conditions (a) and (b) of statement (iii) in Theorem \ref{thm:main} can be worked out via an asymptotic probabilistic analysis of \eqref{eq:above101}. Furthermore, to check condition 3) observe that for any feasible $\tilde\x\in\Sc_\x$ with $\|\tilde\x\|_2=\tilde\alpha$, we have
\begin{align}
\upsilon^\mn(\tilde\x;\g,\h) \geq \min_{\x\in\Sc_\x, \|\x\|_2=\tilde\alpha} \xfunx &= \min_{\x\in\Sc_\x, \|\x\|_2=\tilde\alpha}  \max_{0\leq \beta \leq 1} \beta \|\x\|_2\|\g\|_2 + \beta\h^T\x \label{eq:above102}\\
&\geq  \max_{0\leq \beta \leq 1} \min_{\x\in\Sc_\x, \|\x\|_2=\tilde\alpha}  \beta \|\x\|_2\|\g\|_2 + \beta\h^T\x \label{eq:above103}\\
&=  \max_{0\leq \beta \leq 1} \phi^\mn_\rn(\tilde\alpha,\beta;\g,\h).\label{eq:above104}
\end{align} 
In \eqref{eq:above102} we used \eqref{eq:above99}; for the min-max inequality in \eqref{eq:above103} see for example \cite[Lem.~36.1]{Roc70}; for \eqref{eq:above104} recall \eqref{eq:above100}. Combining this with \eqref{eq:above101}, it can be seen that condition (c) of the theorem corresponds to some sort of strong-convexity  of the function $\max_{0\leq \beta \leq 1} \phi^\mn_\rn(\cdot,\beta;\g,\h)$. 
Corollary \ref{cor:II} formalizes these ideas. 
See Section \ref{sec:app} and references therein for specific examples where Corollary \ref{cor:II} applies.

\begin{cor}\label{cor:II}
Let the same setup as in Theorem \ref{thm:main} and the assumptions of statement (ii) therein hold. Suppose the following are true:
\begin{enumerate}[(1)]
\item We can express $\xfunx$ in \eqref{eq:xfun} as $\max_{\bb\in\Sc_\bb^\mn} \nu^\mn(\x,\bb;\g,\h)$, for $\Sc_\bb^\mn$ convex, compact, and, $\nu^\mn(\x,\bb)$ continuous and convex-concave.
\item $K^\mn:=\max_{\x\in\Sc_\x^{(n)}}\|\x\|=K<\infty$
and the function $\phi^\mn_\rn(\cdot,\bb;\g,\h):[0,K]\rightarrow\R$ defined as
$
\phi^\mn_\rn(\alpha,\bb;\g,\h) := \min_{\substack{\x\in\Sc^\mn_\x, \|\x\|=\alpha}}\nu^\mn(\x,\bb;\g,\h)$
is convex with probability one.
\item There exists   $d:[0,K]\rightarrow\R$  such that $\max_{\bb\in\Sc^\mn_\bb}\phi^\mn_\rn(\alpha,\bb;\g,\h)\rP d(\alpha)$ for all $\alpha\in[0,K]$.
\item $d(\cdot)$ is strongly convex and has unique minimizer $\alpha_*$ in $[0,K]$.
\end{enumerate}
Then, $ \Phi^\mn(\G) \rP d(\alpha_*)$ and $\|\x_\Phi^\mn(\G)\|  \rP \alpha_*.$
\end{cor}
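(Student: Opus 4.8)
The plan is to reduce the corollary to statement (iii) of Theorem~\ref{thm:main}: I will verify hypotheses (a), (b) and (c) of that statement and then read off $\Phi^\mn(\G)\rP d(\alpha_*)$ from statement (ii). The whole reduction rests on \emph{scalarizing} the (AO). Write $\Lc^\mn(\g,\h)=\min_{\x}\max_{\bb\in\Sc_\bb^\mn}\nu^\mn(\x,\bb;\g,\h)$ using condition~(1). Since $\nu^\mn$ is convex--concave and both constraint sets are convex and compact, I flip the order of the two operations via \cite[Cor.~37.3.2]{Roc70}, then split $\min_\x$ into an outer minimization over $\alpha=\|\x\|\in[0,K]$ and an inner minimization over $\{\x\in\Sc_\x^\mn:\|\x\|=\alpha\}$, obtaining $\Lc^\mn(\g,\h)=\max_{\bb}\min_{0\le\alpha\le K}\phi^\mn_\rn(\alpha,\bb;\g,\h)$. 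The map $\phi^\mn_\rn(\alpha,\bb)$ is convex in $\alpha$ by condition~(2) and concave in $\bb$ (a pointwise minimum of the concave maps $\bb\mapsto\nu^\mn(\x,\bb)$), so a second min--max exchange gives the scalarization $\Lc^\mn(\g,\h)=\min_{0\le\alpha\le K}\hat\phi^\mn_\rn(\alpha;\g,\h)$, where $\hat\phi^\mn_\rn(\alpha;\g,\h):=\max_{\bb\in\Sc_\bb^\mn}\phi^\mn_\rn(\alpha,\bb;\g,\h)$. The same weak min--max inequality also yields the pointwise bound $\xfunx\ge\hat\phi^\mn_\rn(\|\x\|;\g,\h)$ for every feasible $\x$, which does all the heavy lifting below.

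Next I establish uniform convergence and hypothesis (a). Each realization of $\hat\phi^\mn_\rn(\cdot)$ is convex on $[0,K]$ (a pointwise maximum of the convex maps $\alpha\mapsto\phi^\mn_\rn(\alpha,\bb)$), and condition~(3) gives pointwise convergence in probability to the convex limit $d(\cdot)$. I would upgrade this to $\sup_{\alpha\in[0,K]}|\hat\phi^\mn_\rn(\alpha)-d(\alpha)|\rP 0$ using the classical fact that pointwise convergence of convex functions on a compact interval is automatically uniform; in the stochastic setting this is run through a finite grid, with the equi-Lipschitz control supplied by convexity, to produce convergence in probability. Since $\Lc^\mn=\min_\alpha\hat\phi^\mn_\rn(\alpha)$ and $\min_\alpha d(\alpha)=d(\alpha_*)$, uniform convergence forces $\Lc^\mn(\g,\h)\rP d(\alpha_*)=:d_*$, which is (a); and, by strong convexity (condition~(4)), every minimizer of $\hat\phi^\mn_\rn$ lies in a shrinking neighborhood of the unique minimizer $\alpha_*$, i.e.\ converges in probability to $\alpha_*$.

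For hypothesis (b), let $\xpn$ be any (AO) minimizer and set $\alpha^\mn:=\|\xpn\|$. The chain $\Lc^\mn=\xfun(\xpn;\g,\h)\ge\hat\phi^\mn_\rn(\alpha^\mn)\ge\min_\alpha\hat\phi^\mn_\rn(\alpha)=\Lc^\mn$, whose first inequality is the pointwise bound established above, collapses to equality, so $\alpha^\mn$ is a global minimizer of $\hat\phi^\mn_\rn$; by the previous paragraph $\|\xpn\|=\alpha^\mn\rP\alpha_*$, giving (b). For the growth hypothesis (c), fix $\eta>0$ and work on the high-probability event where $|\alpha^\mn-\alpha_*|<\eta/4$, $\sup_\alpha|\hat\phi^\mn_\rn(\alpha)-d(\alpha)|<\delta$, and $|\Lc^\mn-d_*|<\delta$. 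Any $\x$ with $|\,\|\x\|-\alpha^\mn|\ge\eta/2$ then satisfies $|\,\|\x\|-\alpha_*|\ge\eta/4$, so strong convexity gives $d(\|\x\|)\ge d_*+\tfrac{\kappa}{2}(\eta/4)^2$ and hence $\xfunx\ge\hat\phi^\mn_\rn(\|\x\|)\ge\Lc^\mn+\tfrac{\kappa}{2}(\eta/4)^2-2\delta$; taking $\delta$ small yields a strictly positive uniform gap. This is exactly the content that the proof of statement (iii) extracts from its hypothesis (c), so \eqref{eq:norms} applies and gives $\|\xPn\|\rP\alpha_*$. Finally, the cost claim $\Phi^\mn(\G)\rP d_*$ is immediate from statement (ii): setting $\mu=d_*$ in \eqref{eq:dual} and invoking (a) forces $\Pro(|\Phi^\mn(\G)-d_*|>t)\to 0$ for every $t>0$.

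The main obstacle is the transfer in this last step. Strong convexity of the limit $d$ is \emph{not} inherited verbatim by the convex approximants $\hat\phi^\mn_\rn$ (their bottoms may be slightly flat), so the literal quadratic lower bound in hypothesis (c) can fail for norms within $o(1)$ of $\alpha^\mn$. What does survive is the per-$\eta$ uniform gap derived above, and the delicate point is to confirm that the proof of statement (iii) only ever uses this gap rather than the pointwise quadratic; the accompanying subtlety is the convexity-to-uniform-convergence upgrade of the uniform-convergence step, which is what converts the merely pointwise hypothesis~(3) into control over the minimizers.
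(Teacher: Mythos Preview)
Your approach mirrors the paper's proof: scalarize the (AO) via two applications of \cite[Cor.~37.3.2]{Roc70} to obtain $\phi^\mn(\g,\h)=\min_{0\le\alpha\le K}\hat\phi^\mn_\rn(\alpha;\g,\h)$, establish the pointwise bound $\xfunx\ge\hat\phi^\mn_\rn(\|\x\|;\g,\h)$ via the weak min--max inequality, upgrade the pointwise convergence of assumption (3) to uniform convergence on $[0,K]$ using convexity (the paper cites \cite[Cor.~II.1]{AG1982} and \cite[Thm.~2.7]{NF36} for exactly this step), and then feed hypotheses (a), (b), (c) into statement (iii) of Theorem~\ref{thm:main}.

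Your caution in the final paragraph is well placed and in fact sharper than the paper's own argument. The paper attempts to verify the \emph{literal} quadratic bound $\xfunx\ge\phi^\mn(\g,\h)+\tau(\|\x\|-\|\xpn\|)^2$ by combining strong convexity of $d$ about $\alpha_*$ with the triangle-inequality estimate $|\,\|\tilde\x\|-\alpha_*|\ge|\,\|\tilde\x\|-\|\xpn\||-\delta_2$; but when squaring, the cross term $-2\delta_2|\,\|\tilde\x\|-\|\xpn\||$ is silently dropped, so the displayed inequality with ``$+\tau\delta_2^2$'' does not follow as written. Your per-$\eta$ gap argument sidesteps this, and your reading of the proof of statement (iii) is correct: that proof only ever invokes condition (c) on the set $\{|\,\|\x\|-\|\xpn\||\ge\eta/2\}$ to extract the fixed gap $\tau(\eta/2)^2$, which is precisely what you supply. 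So your reduction goes through.
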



%

\noindent A few remarks are in place.

\begin{enumerate}
\item Essentially assumptions (1) and (2) of the corollary are equivalent to being able to express \eqref{eq:2a} as:
\beq
\phi^\mn(\g,\h) = \min_{0\leq\alpha\leq K} \max_{\bb\in\Sc_\bb} \phi^\mn_{\|\cdot\|}(\alpha,\bb;\g,\h),
\eeq
where the optimal value of $\alpha$ corresponds to $\|\xpn\|$. Please see \eqref{eq:above101}. This step reduces the minimization in \eqref{eq:2a} to a scalar minimization and we often refer to it as ``scalarization".

\item Assumption (2)  essentially restricts $\{\x^\mn_\Phi(\G)\}_{n\in\mathbb{N}}$ to be bounded (in probability). Similarly, (4) presumes that for any $\alpha\in[0,K]$, the sequence $\{\max_{\bb\in\Sc^\mn_\bb}\phi_{\|\cdot\|}^\mn(\alpha,\bb;\g,\h)\}_{n\in\mathbb{N}}$ is stochastically bounded\footnote{a sequence $\{\mathcal{X}^\mn\}_\mn$ of random variables is stochastically bounded if for any $\eps$ there exists $C>0$ large enough such that $\lim_{n\rightarrow\infty}\Pro(|\mathcal{X}_n|>C)\leq\eps$.}. Both these restrictions are primarily imposed to to simplify the statement of the theorem and can in principle  be satisfied after proper normalization with $n$.
\item To find a candidate function $d(\cdot)$ that satisfies (4), one needs to  calculate the probability limit of \\ $\max_{\bb\in\Sc_\bb^\mn}\phi^\mn_\rn(\alpha,\bb;\g,\h)$ for each $\alpha$. 
Thanks to Lipschitzness (see Lemma \ref{lem:lip}) the involved random quantities concentrate. Thus, the cost of the deterministic optimization $\min_\alpha d(\alpha)$ reflects the asymptotic behavior of the original random problem. 
\end{enumerate}
In what follows we prove Corollary \ref{cor:II}.
\begin{proof}
It is convenient to define
$$
\hat\phi_\rn^\mn(\alpha;\g,\h) := \max_{\bb\in\Sc^\mn_\bb}\phi^\mn_\rn(\alpha,\bb;\g,\h).
$$
Let all the assumptions of the corollary hold true. We will show that these imply the three conditions of statement (iii) of Theorem \ref{thm:main}. This will suffice to prove the result.

First, we claim that 
\begin{align}\label{eq:claim_1}
\phi^\mn(\g,\h) = \min_{0\leq\alpha\leq K} \hat\phi_\rn^\mn(\alpha;\g,\h).
\end{align}
This follows from the following sequence of equations,
\begin{align}
\phi^\mn(\g,\h) = \min_{\x\in\Sc_\x} \xfunx &= \min_{\x\in\Sc_\x} \max_{\bb\in\Sc_\bb} \nu^\mn(\x,\bb;\g,\h) \nn \\
&= \max_{\bb\in\Sc_\bb} \min_{\x\in\Sc_\x}  \nu^\mn(\x,\bb;\g,\h) \label{eq:this_1}\\
&= \max_{\bb\in\Sc_\bb} \min_{0\leq \alpha\leq K} \min_{\substack{\x\in\Sc_\x\\ \|\x\|=\alpha}}  \nu^\mn(\x,\bb;\g,\h)  = \max_{\bb\in\Sc_\bb} \min_{0\leq \alpha\leq K}  \phi^\mn(\alpha,\bb;\g,\h) \nn
\\ &= \min_{0\leq \alpha\leq K} \max_{\bb\in\Sc_\bb} \phi^\mn(\alpha,\bb;\g,\h)  = \min_{0\leq\alpha\leq K} \hat\phi_\rn^\mn(\alpha;\g,\h). \label{eq:this_2}
\end{align}
\eqref{eq:this_1} follows from \cite[Cor.~37.3.2]{Roc70} since $\nu^\mn(\x,\bb;\g,\h)$ is convex-concave and the constraint sets are convex and compact (cf. assumption (1)). The first equality in \eqref{eq:this_2} follows from another application of \cite[Cor.~37.3.2]{Roc70}: the constraint sets are clearly convex and compact; the objective function is convex in $\alpha$ by assumption (2) and is concave in $\bb$ as the point-wise minimum of concave functions. 

Next, consider any feasible $\tilde\x\in\Sc_\x$ with $\|\tilde\x\|=\tilde\alpha$. We prove that
\begin{align}\label{eq:claim_2}
\upsilon^\mn(\tilde\x;\g,\h) \geq \hat\phi_\rn^\mn(\tilde\alpha;\g,\h).
\end{align}
For this, we have the following chain of inequalities,
\begin{align}
\upsilon^\mn(\tilde\x;\g,\h) \geq \min_{\substack{\x\in\Sc_\x \\ \|\x\|=\tilde\alpha}} \upsilon^\mn(\tilde\x;\g,\h) &= \min_{\substack{\x\in\Sc_\x \\ \|\x\|=\tilde\alpha}} \max_{\bb\in\Sc_\bb} \nu^\mn(\x,\bb;\g,\h) \nn\\
&\geq  \max_{\bb\in\Sc_\bb} \min_{\substack{\x\in\Sc_\x \\ \|\x\|=\tilde\alpha}} \nu^\mn(\x,\bb;\g,\h)  \label{eq:this_3} \\
&=  \max_{\bb\in\Sc_\bb}  \phi^\mn(\tilde\alpha,\bb;\g,\h) = \hat\phi_\rn^\mn(\tilde\alpha;\g,\h).\nn
\end{align}
\eqref{eq:this_3} follows from the min-max inequality \cite[Lem.~36.1]{Roc70}.

Finally, we show that $\hat\phi_\rn^\mn(\alpha;\g,\h)$ converges \emph{uniformly} in probability to $d(\alpha)$ in the compact set $[0,K]$, i.e. for all $\delta>0$ the following holds with probability approaching one in the limit $n\rightarrow\infty$ :
\begin{align}\label{eq:claim_3}
\sup_{\alpha\in[0,K]}  \left| { \hat\phi_\rn^\mn(\alpha;\g,\h) } - {d(\alpha)}  \right| <\delta .
\end{align}
Assumption (3) of the corollary guaranties point-wise convergence, i.e. for all $\alpha\in[0,K]$ $ \hat\phi_\rn^\mn(\alpha;\g,\h)\rP d(\alpha)$. Notice that $\hat\phi_\rn^\mn(\cdot;\g,\h)$ is convex as the point-wise maximum of convex functions (cf. assumption (2)). This is critical to establish \eqref{eq:claim_3}, since ``for \emph{convex} functions pointwise convergence in probability  implies uniform convergence in compact subsets" \cite[Corollary II.1]{AG1982} (see also \cite[Theorem 2.7]{NF36}). 
%

Now, we are ready to show the three conditions of statement (iii) in Theorem \ref{thm:main}, namely
\begin{align}\label{eq:cond_1}
\phi^\mn(\g,\h) \rP d(\alpha_*),
\end{align}
\begin{align}\label{eq:cond_2}
\|\x_\phi^\mn(\g,\h)\|\rP\alpha_*,
\end{align}
\begin{align}\label{eq:cond_3}
\xfunx \geq \phi^\mn(\g,\h) + \tau(\|\x\|-\|\xpn\|), \forall \x\in\Sc_\x, \text{ with probability 1 } .
\end{align}
Here, recall that $\alpha_*$ is the unique minimizer of $d$ from assumption (4) of the corollary. Also, we choose $\tau>0$ be such that 
\begin{align}\label{eq:sc}
d(\alpha) \geq d(\az) + \tau|\alpha-\az|^2,~~ \forall \alpha\in [0,K].
\end{align}
This is also guarantied by the strong convexity assumption (4).

Let us start with \eqref{eq:cond_1}. From \eqref{eq:claim_3}, $\min_{0\leq\alpha\leq K}\hat\phi_\rn^\mn(\alpha;\g,\h)\rP d(\alpha_*)$. Then, the claim follows from \eqref{eq:claim_1}.

\eqref{eq:cond_1} follows with a similar argument. From the equalities in \eqref{eq:this_2}, $$\|\xpn\|=\alpha_*(\g,\h):=\arg\min_{0\leq\alpha\leq K}\hat\phi_\rn^\mn(\alpha;\g,\h).$$
The uniform convergence result in \eqref{eq:claim_3} and the fact that $\alpha_*$ is unique, can be used to show that $\alpha_*(\g,\h)\rP\alpha_*$; for example see \cite[Thm.~2.1]{NF36}.

Finally, we prove \eqref{eq:cond_3} as follows. 
%
%
Let any $\tilde\x\in\Sc_\x$ with $\|\tilde\x\|=\tilde\alpha$. Fix $\eps>0$ and let $\delta_1, \delta_2>0$ be constants to be determined later in the proof.  From  uniform convergence in \eqref{eq:claim_2}, the following events occurs w.p.a. 1,

\begin{subequations}
\beq\label{eq:one}
\hat\phi^\mn_\rn(\tilde\alpha;\g,\h) \geq  d(\tilde \alpha) - \delta_1,
\eeq
\beq\label{eq:two}
\phi^\mn(\g,\h) \leq \hat\phi^\mn_\rn(\alpha_*;\g,\h) \leq d(\alpha_*)+\delta_1.
\eeq
\end{subequations}
Furthermore, from \eqref{eq:cond_2} w.p.a. 1,
\beq\label{eq:three}
|\|\xpn\|-\alpha_*| \leq \delta_2.
\eeq
%
Also the strong convexity assumption (4) (cf. \eqref{eq:sc}) shows that
\begin{align}\label{eq:three}
d(\tilde\alpha) \geq d(\az) +  \tau(\tilde\alpha-\alpha_*)^2.
\end{align}
Combining \eqref{eq:claim_2} with \eqref{eq:one}, \eqref{eq:two} and \eqref{eq:three} we find that w.p.a. 1,
\begin{align}\label{eq:almost}
\upsilon^\mn(\tilde\x;\g,\h) \geq  \phi^\mn(\g,\h) -2\delta_1 + \tau(\|\tilde\x\|-\alpha_*)^2.
\end{align}
By the triangular inequality and \eqref{eq:three}, 
$$|\|\tilde\x\|-\alpha_*| \geq |\|\tilde\x\|-\|\xpn\|| - |\|\xpn\|-\alpha_*| \geq |\|\tilde\x\|-\|\xpn\|| - \delta_2.$$
Combine this with \eqref{eq:almost} to yield
\begin{align}\label{eq:almost}
\xfunx \geq  \phi^\mn(\g,\h)  + \tau(\|\tilde\x\|-\|\xpn\|)^2 + \tau\delta_2^2-2\delta_1.
\end{align}
Choose $\delta_1$ and $\delta_2$ such that $\tau\delta_2^2\geq2 \delta_1$  to conclude with the desired as in \eqref{eq:cond_3}.
This concludes the proof of the corollary.
\end{proof}

\subsection{On the compactness and continuity assumptions of Theorem \ref{thm:main}}\label{sec:tech}

First, we argue that the set of minima in \eqref{eq:gL1}, say $\Ws$, is nonempty and compact. Using the assumptions on $\loss(\cdot)$ and $f(\cdot)$, the objective function $\loss(\A\w-\z) + \la f(\x_0+\w)$ is continuous on $\R^n$ and coercive for any $\la\geq 0$.  Hence, the claim follows from Weiestrass' Theorem as in \cite[Prop. 2.1.1]{Bertsekas}. This implies that the optimal cost and the set of minima in \eqref{eq:gL1} does not change if we restrict the minimization over the set $\Sc_\w:=\{\w\in\R^n | \|\w\|_2\leq K_\w\}$ provided
\begin{align}\label{eq:cw1}
\max_{\w\in\Ws}\|\w\|_2\leq K_\w< \infty.
\end{align}

Next, let us argue on the compactness of the set of maxima in \eqref{eq:gL2}.  From convexity and continuity of $\loss$ it can be seen (e.g. \cite[Thm. 23.5]{Roc70}) that $\ub^T\vb-\loss^*(\ub)$ achieves its supremum over $\ub$ at a point in the subdifferential of $\loss(\cdot)$ at $\vb$, i.e.,  $\ub\in\partial\loss(\vb)$. Translating this into \eqref{eq:gL2}: for any $\w$, $\arg\sup_{\ub}\{\ub^T(\A\w-\z)-\loss^*(\ub)\}\in\partial\loss(\A\w-\z)$. It is then clear that for $\w\in\Sc_\w$, the set of $\ub$ at which the supremum is achieved in \eqref{eq:gL2} can be expressed as $\mathcal{U}_*:=\bigcup\{  \partial\loss(\A\w-\z) | \w\in\Sc_\w \}$. By assumption, $\loss(\cdot)$ is continuous and convex. Also, $\Sc_\w$ is a bounded subset of $\R^n$. Thus, $\mathcal{U}_*$ is bounded \cite[Prop. 4.2.3]{Bertsekas}.
This implies that the optimal cost and the set of minima in \eqref{eq:gL2} does not change if we replace the $\sup$ operator with a maximization over the set $\Sc_\ub:=\{\ub\in\R^m | \|\ub\|_2\leq K_\ub\}$ provided
\begin{align}\label{eq:cu1}
\max_{\ub\in\mathcal{U}_*}\|\ub\|_2\leq K_\ub< \infty.
\end{align}
We remark that if $\dom\loss^*$ is bounded, we can of course choose $\Sc_\ub=\dom\loss^*$. For instance, if $\loss(\cdot)=\|\cdot\|$, then $\dom\loss^*=\{\ub ~|~ \|\ub\|_*\leq 1\}$.
To summarize, \eqref{eq:gL2} is equivalent to \eqref{eq:gL3} provided that $K_\w$ and $K_\ub$ satisfy \eqref{eq:cw1} and \eqref{eq:cu1}.

Finally, we comment on the continuity of the function $\psi(\w,\ub):=-\ub^T\z-\loss^*(\ub)+\la f(\x_0+\w)$.  Application of Theorem \ref{thm:main} requires $\psi(\w,\ub)$ to be continuous on $\Sc_\w\times\Sc_\ub$. By assumption, $f(\cdot)$ is continuous. Thus, the only component we need to worry about is $\loss^*(\cdot)$.
Recall that $\loss^*(\cdot)$ was defined in Section \ref{sec:app_gen} as an extended real-valued function. Consequently, $\f(\w,\ub)$ is also an extended real-valued function and $\dom\psi = \Sc_\w\times\dom\loss^*$. It can be shown that Theorem \ref{thm:main} continues to hold provided that $\f(\w,\ub)$ is continuous on its effective domain. But this is true by the assumption made in Section \ref{sec:app_gen} on $\loss^*$ being continuous on $\dom\loss^*$.

%

A subtlety in the discussion thus far is that the constraints imposed on $K_\w$ and $K_\ub$ in \eqref{eq:cw1} and \eqref{eq:cu1} depend on the specific values of $\A\in\R^{m\times n}$, $\z\in\R^m$ and $\x_0\in\R^n$. What we have shown is that for \emph{any} triplet $\A,\z,\x_0$ there exist $K_\w(\A,\z,\x_0)<\infty$ and $K_\ub(\A,\z,\x_0)<\infty$ such that \eqref{eq:gL2} is equivalent to \eqref{eq:gL3}. Choosing the maximum such constants over all triplets guarantees the desired conclusion.

\end{document}